\newcommand{\dis}{\stackrel{d}{\sim}}
\newcommand{\eqla}{\stackrel{(a)}{=}}
\newcommand{\eqlb}{\stackrel{(b)}{=}}
\newcommand{\eqlc}{\stackrel{(c)}{=}}
\newcommand{\eqld}{\stackrel{(d)}{=}}
\newcommand{\eqle}{\stackrel{(e)}{=}}
\newtheorem{Thm}{Theorem}
\newtheorem{Lem}{Lemma}
\newtheorem{Cor}{Corollary}
\newtheorem{Prob}{Problem}
\newtheorem{Rem}{Remark}
\begin{document}

\title{Analysis and Optimization of  Caching and Multicasting  in Large-Scale  Cache-Enabled Wireless Networks}
\author{Ying Cui\thanks{Y. Cui and D. Jiang are with the Department of  Electronic Engineering, Shanghai Jiao Tong University, China.  Y. Wu is with the Department of Electrical and Electronic Engineering, The University of Hong Kong, Hong Kong. This paper was  presented in part at IEEE GLOBECOM 2015 \cite{Cui15SGOPTGC}. Y. Cui was supported in part by the National Science Foundation of China
grant 61401272.},  {\em MIEEE}, \and Dongdong Jiang, {\em StMIEEE}, \and Yueping Wu,  {\em MIEEE}}

\maketitle

\begin{abstract}
Caching and multicasting at base stations are two
promising approaches to support massive content delivery over
wireless networks. However, existing analysis and designs do not fully explore and exploit
the potential advantages of the two approaches.  In this paper, we consider  the analysis and optimization of caching and multicasting  in a large-scale  cache-enabled wireless  network. We propose a random caching and multicasting scheme with a design parameter.
By carefully handling different types of interferers and adopting appropriate approximations,
we derive a tractable  expression  for the successful transmission probability in the general   region,  utilizing tools from stochastic geometry. We also obtain  a closed-form expression for the successful transmission probability
in the high signal-to-noise ratio (SNR) and user density region. Then,  we consider the successful transmission probability maximization, which is a very complex non-convex problem in general.
Using optimization techniques,  we develop an iterative  numerical algorithm to obtain a local optimal caching and multicasting design in the general region.
To reduce complexity and maintain superior performance, we also derive an asymptotically optimal  caching and multicasting  design  in the asymptotic region,  based on a two-step optimization framework.
 Finally, numerical simulations  show that the asymptotically  optimal design
 achieves  a significant gain in successful transmission probability over some baseline schemes in the general  region.
\end{abstract}

\begin{keywords}
Cache, multicast, stochastic geometry, optimization
\end{keywords}

\section{Introduction}

The demand for wireless communication services has been
shifting from connection-oriented services such as traditional
voice telephony and messaging to content-oriented services
such as multimedia, social networking and smartphone
applications.
Recently, to support the dramatic growth of  wireless data traffic, caching at base
stations (BSs) has been proposed as a promising approach
for massive content delivery by reducing the distance between popular contents and their requesters (i.e., users) \cite{CMag14Chen,Bastug14,cachingmimoLiu15}. In \cite{TassiulasIT13,Shanmugam13, Poularakis14},  optimal caching designs are considered for simple models of  wireless networks, 
where channel fading is not considered.
For example, in \cite{TassiulasIT13}, the authors consider the optimal caching and delivery design to minimize the link capacity required to sustain a certain request process in a 2-D square grid network.
The  combinatorial optimization problem  is reduced to a simpler continuous optimization problem with the same order of performance as the original problem. 
In \cite{Shanmugam13}, the authors consider the optimal caching design  to minimize the expected file downloading time   in a helper-enabled cellular network   modeled by a  bipartite graph.
The optimization problem for the uncoded case is NP-hard, and a greedy caching design is proposed to achieve performance within a factor 2 of the optimum. 
In \cite{Poularakis14}, the authors consider the optimal caching and routing design in a small cell network.
The optimization problem is NP-hard, and a novel reduction to a variant
of the facility location problem is proposed to obtain a caching design with approximation
guarantees.


In \cite{EURASIP15Debbah,ICC15Giovanidis,Malak14,DBLP:journals/corr/Tamoor-ul-Hassan15}, more realistic network models based on stochastic geometry are considered to characterize the stochastic natures of  channel fading and geographic locations of  BSs and users. Based on these models,  the performance of caching designs at BSs is  analyzed and optimized. Specifically, in \cite{EURASIP15Debbah}, the authors  consider identical caching  at BSs (i.e., all BSs store the same set of popular files), and analyze the outage probability and average rate.
In \cite{ICC15Giovanidis}, the authors consider  random caching at BSs, and analyze and optimize  the hit probability. 
In \cite{Malak14}, the authors consider random caching in D2D networks, and analyze and optimize the total coverage probability, assuming that content  requests follow Zipf distribution.
In \cite{DBLP:journals/corr/Tamoor-ul-Hassan15}, the authors consider random caching of a uniform distribution, and analyze the cache hit probability and the content outage probability, assuming that content  requests follow  a uniform distribution. In addition, in \cite{Altman13} and \cite{DBLP:journals/corr/BharathN15}, the authors also model  cache-enabled wireless networks using stochastic geometry without capturing channel fading (in the performance metrics). Specifically, \cite{Altman13} studies   the expected costs of obtaining a complete content under random uncoded caching and coded caching strategies, which are designed only for different pieces of a single content. Reference \cite{DBLP:journals/corr/BharathN15} considers random caching with contents being stored at each BS in an i.i.d. manner,  and analyzes the minimum offloading loss, assuming unknown file popularity profiles. 
Note that the identical caching design in \cite{EURASIP15Debbah} cannot provide file diversity at different BSs, and hence may not sufficiently exploit  storage resources. In contrast,  the random caching designs in \cite{ICC15Giovanidis,Malak14,DBLP:journals/corr/Tamoor-ul-Hassan15,Altman13,DBLP:journals/corr/BharathN15} can provide file diversity. However, the i.i.d. random caching   in \cite{DBLP:journals/corr/BharathN15} may still waste  storage resources, as multiple copies of a file may be stored at each BS. The random caching designs in \cite{Malak14,DBLP:journals/corr/Tamoor-ul-Hassan15,Altman13} only provide caching probabilities of files and do not specify how  multiple different files can be efficiently stored at each BS based on these probabilities. In
\cite{ICC15Giovanidis}, the authors provide a graphical method to obtain caching probabilities of file combinations based on the caching probabilities of files, which may not be systematic for   large-scale networks.
In addition,  in \cite{EURASIP15Debbah,ICC15Giovanidis,Malak14,DBLP:journals/corr/Tamoor-ul-Hassan15,Altman13,DBLP:journals/corr/BharathN15},  transmission schemes for  cached files are not specified. The performance metrics are only related to the coverage probability, and do not reflect resource sharing among users.

On the other hand, enabling multicast service at BSs is  an efficient way to deliver popular contents to multiple
requesters simultaneously by effectively utilizing the broadcast nature of the wireless medium\cite{eMBMS}.
Caching and multicasting at BSs  have been  jointly considered in the
literature. For example, in \cite{WCNC14Tassiulas}, the authors propose a heuristic caching algorithm for a given multicasting design
 to reduce the service cost.  In \cite{TON14Shakkottai}, the authors  propose a joint throughput-optimal caching and multicasting
algorithm   to maximize the service rate.  In \cite{ZhouISIT15}, the authors obtain
the optimal dynamic multicast scheduling to
  minimize the system cost. Note that \cite{WCNC14Tassiulas,TON14Shakkottai,ZhouISIT15} focus on algorithm design instead of performance analysis, and do not offer many insights into the fundamental impact of caching and multicasting.
 In addition, \cite{WCNC14Tassiulas,TON14Shakkottai,ZhouISIT15}  consider  simple network models which cannot capture the geographic  features of the locations of BSs and users or statistical properties of signal and interference.

In summary, further studies are required to understand the impact of caching and multicasting   in cache-enabled wireless  networks. In this paper, we  consider the analysis and optimization of joint caching and multicasting in a  large-scale wireless  network. Our network model effectively characterizes  the stochastic natures of
channel fading and geographic locations of  BSs and users. The main contributions of this paper are summarized below.

$\bullet$ First, we propose a random caching and multicasting scheme with a design parameter to effectively  improve the efficiency of information dissemination.  Different from \cite{EURASIP15Debbah,ICC15Giovanidis,Malak14,DBLP:journals/corr/Tamoor-ul-Hassan15,Altman13,DBLP:journals/corr/BharathN15}, we specify a transmission scheme, i.e., multicasting, which exploits  the broadcast nature of the wireless medium. In addition, different from the random caching designs in \cite{ICC15Giovanidis,Malak14,DBLP:journals/corr/Tamoor-ul-Hassan15,Altman13,DBLP:journals/corr/BharathN15}, the proposed random caching design is on the basis of file combinations, and can make better use of storage resources when multicasting is considered.

$\bullet$ Next, we analyze the successful transmission probability. Caching on the basis of file combinations significantly complicates the distributions of user associations and interferers, and hence makes the analysis very challenging. By carefully handling different types of interferers and adopting appropriate approximations, we derive a tractable  expression  for the successful transmission probability in the general  region, utilizing tools from stochastic geometry. We also derive  a closed-form expression  for the successful transmission probability in  the high SNR and user density region.\footnote{Note that the tractable expression  involves integrals, but can be easily computed using numerical computation tools, e.g., MATLAB. The closed-form expression does not have unsolvable integrals, and  can be computed much more efficiently.
} These expressions reveal the impacts of physical layer   and information-related parameters on the successful transmission probability.

$\bullet$ Then, we consider the successful transmission probability maximization, which is a very complex non-convex problem in general, due to the sophisticated structure of the  successful transmission probability.
Using optimization techniques, we develop an iterative  numerical algorithm to obtain a local optimal caching and multicasting  design in the general  region. To reduce complexity and maintain superior performance, we derive an asymptotically optimal  caching and multicasting  design  in the high SNR and user density region, based on a two-step optimization framework.
The features of the asymptotically optimal design  provide  important design insights.

$\bullet$ Finally, by numerical simulations, we show that the asymptotically  optimal design has much lower computational complexity than the local optimal design and achieves a significant  gain in successful transmission probability in the general  region   over some baseline schemes.

\section{System Model}

\subsection{Network Model}\label{sec:netmodel}
We consider a  large-scale   network, as shown in Fig.~\ref{fig:system}. The locations of the base stations (BSs) are spatially distributed as a homogeneous Poisson point process (PPP) $\Phi_{b}$ with density $\lambda_{b}$. The locations of the users are distributed as an independent homogeneous PPP with density $\lambda_{u}$.
We consider the downlink transmission. Each BS has one transmit antenna with   transmission power $P$. Each user has one receive antenna. The total bandwidth is $W$ (Hz). Consider a discrete-time system with time being slotted and study one slot of the network.
We consider both large-scale fading and small-scale fading. Specifically,
due to large-scale fading,  transmitted
signals with distance $D$ are attenuated by a factor $D^{-\alpha}$, where $\alpha>2$ is the path loss exponent.
For small-scale fading, we assume Rayleigh fading, i.e., each small-scale channel  $h\dis\mathcal {CN}(0, 1)$ or channel power $|h|^2\dis \text{Exp}(1)$\cite{Tse:2005:FWC}.\footnote{Rayleigh fading is a widely used probabilistic channel model in existing literature\cite{Andrews11,WCOM13Andrews}. It is based on the assumption that there are a large number of statistically independent reflected and scattered paths with random amplitudes and random phases uniformly distributed in $[0,2\pi]$, and {\em Central Limit Theorem}\cite{Tse:2005:FWC}.}

Let $\mathcal N\triangleq \{1,2,\cdots, N\}$ denote the set of $N\geq 1$ files (contents) in the network.
For ease of illustration, we assume that all  files  have the same size.\footnote{Note that  files  of different sizes can be divided into chunks  of the same  length. Thus, the results in this paper can be  extended to the case of different file sizes.}  Each file is of certain popularity. We assume that  the file popularity distribution is identical among all users.  Each user randomly  requests one file, which is file $n\in \mathcal N$ with probability $a_n\in (0,1]$, where $\sum_{n\in \mathcal N}a_n=1$.  Thus, the file popularity distribution is given by $\mathbf a\triangleq (a_n)_{n\in \mathcal N }$, which   is assumed to be known   apriori.  In addition, without loss of generality (w.l.o.g.), we assume $a_{1}\ge a_{2}\ldots\ge a_{N}$, i.e., the popularity rank of file $n$ is $n$.

 The   network consists of  cache-enabled BSs. In particular, each BS is equipped with a cache of size $K\leq N$, storing  $K$  different popular files out of $N$, as illustrated in Fig.~\ref{fig:system}.\footnote{Note that   storing more than one copies of the same file at one BS is redundant and  will waste storage resources.}   We say every $K$ different files form a combination. Thus, there are $I\triangleq \binom{N}{K}$ different combinations of $K$  different  files in total. Let $\mathcal I\triangleq \{1,2,\cdots, I\}$ denote the set of $I$ combinations. Combination $i\in \mathcal I$ can be characterized  by an $N$-dimensional vector   $\mathbf x_i\triangleq (x_{i,n})_{n\in \mathcal N}$, where $x_{i,n}=1$ indicates that file $n$ is included in combination $i$ and $x_{i,n}=0$ otherwise. Note that there are  $K$ 1's  in  each $\mathbf x_i$. 
 Denote $\mathcal N_i\triangleq \{n:x_{i,n}=1\}$ as the set of $K$   files contained in combination $i$. Note that in practical networks with large $N$ and $K$, it may not be possible to enumerate all the combinations in $\mathcal I$. However, to understand the natures of joint caching and multicasting in cache-enabled wireless networks, in this paper, we shall  first pose the analysis and optimization on the basis of all the file combinations in $\mathcal I$.  Then, based on the insights obtained, we shall focus on reducing  complexity while maintaining superior performance.

\begin{figure}
\begin{center}
  \subfigure[\small{$K=1$.}]
  {\resizebox{4cm}{!}{\includegraphics{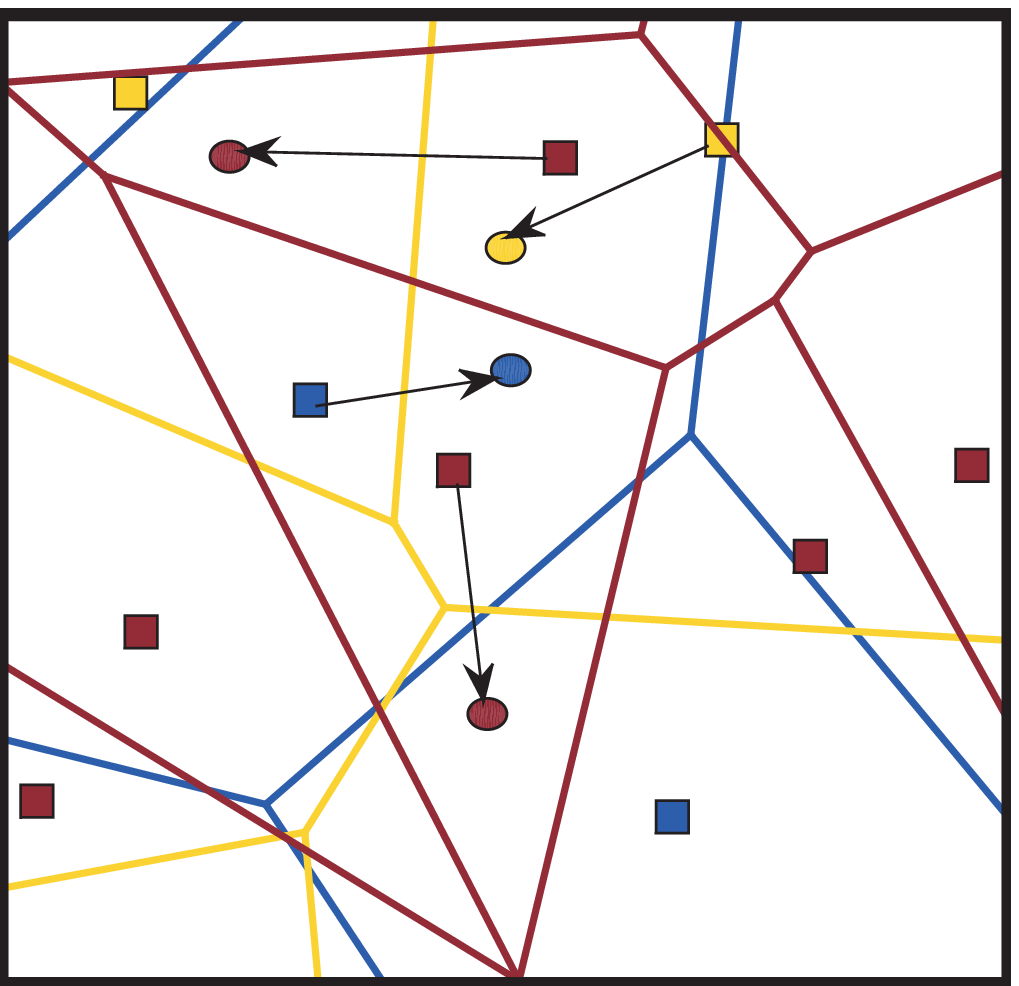}}}\quad
  \subfigure[\small{$K=2$.}]
  {\resizebox{4.3cm}{!}{\includegraphics{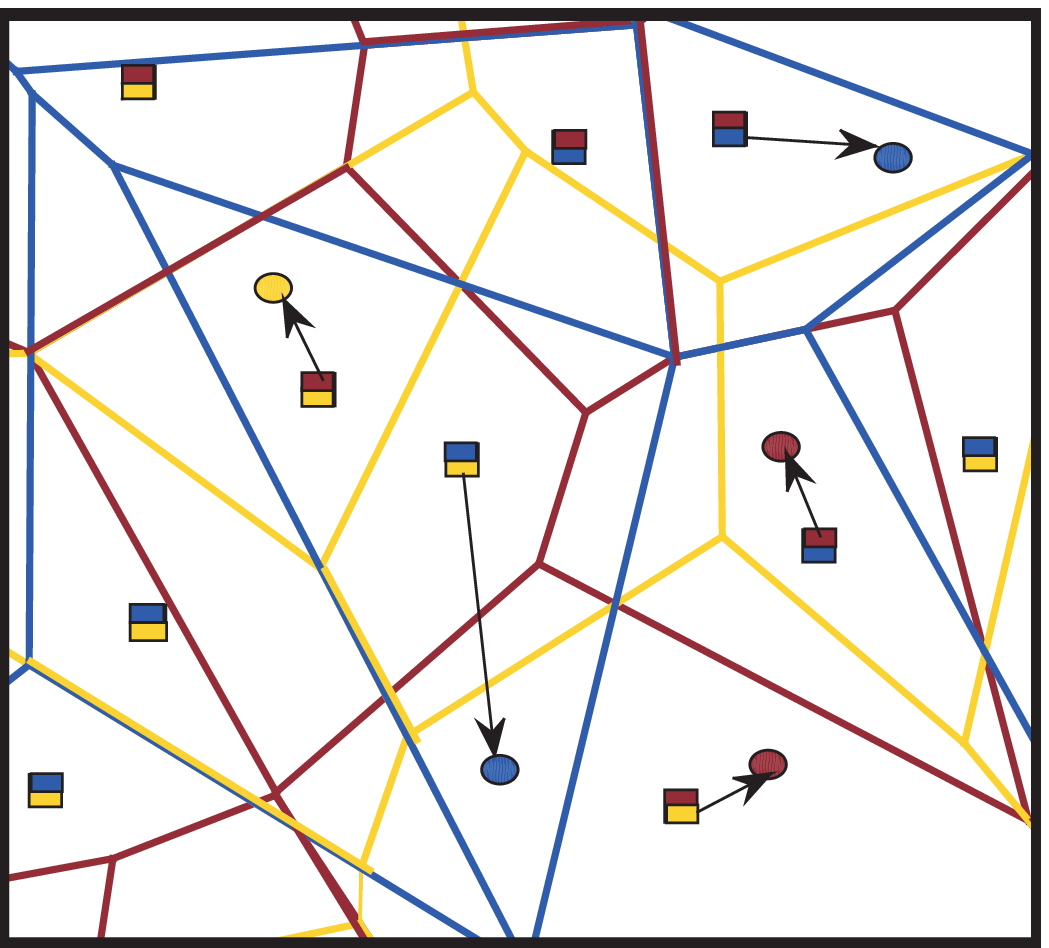}}}
  \end{center}
    \caption{\small{System model. There are three files ($N=3$) in the network, represented by the red, yellow and blue colors, respectively. Each circle represents a user, the color of which indicates the file requested by the user. Each square represents a BS, the color of which indicates the file ($K=1$) or the two different files ($K=2$) stored at the BS.  For each file, there is a corresponding Voronoi tessellation in the same color, which is determined  by the locations of the BSs storing this file. Each arrow represents a signal.}
 }
\label{fig:system}
\end{figure}

\subsection{Caching and Multicasting}\label{subsec:description}

 We consider random caching on the basis of file combinations.
Each BS stores one combination at random, which is combination $i\in \mathcal I$ with probability $p_i$ satisfying
\begin{align}
&0\leq p_i\leq1,\ i\in \mathcal I,\label{eqn:cache-constr-indiv}\\
&\sum_{i\in \mathcal I}p_i=1.\label{eqn:cache-constr-sum}
\end{align}
A random caching design is specified by the caching distribution $\mathbf p\triangleq (p_i)_{ i\in \mathcal I}$.  Let $\mathcal I_n\triangleq \{i\in \mathcal I:x_{i,n}=1\}$ denote the set of $I_n\triangleq \binom{N-1}{K-1} $ combinations  containing file $n$. Let
\begin{align}
T_n\triangleq \sum_{i\in\mathcal{I}_{n}}p_{i}, \  n\in \mathcal N\label{eqn:def-T-n}
\end{align}
denote the probability that file $n$ is stored at a BS. Note that $T_n=p_n$ when $K=1$.
In this paper, we focus on serving  cached files at  BSs to get the first-order insights into the design of cache-enabled wireless networks, as in \cite{ICC15Giovanidis,Malak14}.   BSs may serve uncached files through other service mechanisms.\footnote{For example, BSs can fetch some uncached files from the core network through  backhaul links and multicast them  over other reserved frequency bands. The service of uncached files may involve  backhaul cost or extra delay.} The investigation of service mechanisms for uncached files  is beyond the scope of this paper.

  As illustrated in Fig.~\ref{fig:system}, each user requesting  file $n$ is associated with the nearest BS storing a combination $i\in \mathcal I_n$, referred to as its serving BS, as this BS offers the maximum long-term average receive power for file $n$ \cite{Andrews11,WCOM13Andrews}. Note that the serving BS of a user may not be its  geographically nearest BS and  is also statistically determined by the caching distribution $\mathbf p$.  We refer to this association mechanism as the {\em information-centric association}.  Different from the traditional {\em connection-based association} \cite{Andrews11,WCOM13Andrews}, this association jointly considers the physical layer  and information-centric properties.

We adopt multicast service\footnote{Note that in this paper the multicast service happens once every slot, and hence no additional delay is introduced compared with the traditional unicast scheme.} in the cache-enabled  wireless network for efficient information dissemination.
 Consider one BS which has $L_0\in\{1,2,\cdots\}$ associated users and  $K_0\in\{1,2,\cdots, K\}$ different file requests for the $K$ files stored at this BS. Note that $L_0\geq K_0$. Under the proposed multicasting transmission scheme, the BS transmits  each of the $K_0$ files  at rate $\tau $ (bits/second) and over $\frac{1}{K_0}$  of the total bandwidth $W$ (using FDMA).   All the users which request one of the $K_0$ files from this BS try to decode the file from the single multicast transmission of the file  at the BS.    In contrast, under the traditional connection-based transmission (unicast) scheme, the BS transmits one file to each of the $L_0$ users at some rate depending on the channel condition and over $\frac{1}{L_0}$  of the total bandwidth $W$ (using FDMA).  Note that in this case, $L_0-K_0$ out of $L_0$ file transmissions are redundant. Therefore, compared with the traditional connection-based transmission (unicast), this information-centric transmission (multicast) can  improve the efficiency of the utilization of the wireless medium and reduce the load of the wireless links.

\section{Performance Metric and Problem Formulation}\label{Sec:perf}

\subsection{Performance Metric}\label{Subsec:perfm}

In this paper, w.l.o.g., we study the performance of the typical user denoted as $u_0$, which is located at the origin. The index of the typical user and its serving BS is denoted as $0$. We assume all BSs are active. According to the   caching  and user association illustrated in Section~\ref{subsec:description}, when $u_{0}$ requests file $n$, the received signal of  $u_{0}$ is given by
\begin{align}
y_{0}=D_{0,0}^{-\frac{\alpha}{2}}h_{0,0}x_{0}+\sum_{\ell\in\Phi_{b}\backslash B_{n,0}}  D_{\ell,0}^{-\frac{\alpha}{2}}h_{\ell,0}x_{\ell}+n_{0},
\end{align}
where $B_{n,0}$ is the serving BS  of $u_{0}$, $D_{0,0}$ is the distance between $u_{0}$ and  $B_{n,0}$, $h_{0,0}\dis \mathcal{CN}\left(0,1\right)$ is the small-scale channel between $B_{n,0}$ and $u_{0}$, $x_{0}$ is the transmit signal from $B_{n,0}$  to $u_{0}$, $D_{\ell,0}$ is the distance between BS $\ell$ and $u_{0}$, $h_{\ell,0}\dis \mathcal{CN}\left(0,1\right)$ is the small-scale channel  between BS $\ell$ and $u_{0}$, $x_{\ell}$ is the transmit signal from BS $\ell$  to its scheduled user,  and $n_{0}\dis\mathcal{CN}\left(0,N_{0}\right)$ is the complex additive white Gaussian noise of power $N_0$. The  signal-to-interference plus noise ratio (SINR) of $u_{0}$ is given by
\begin{align}
{\rm SINR}_{n,0} = \frac{{D_{0,0}^{-\alpha}}\left|h_{0,0}\right|^{2}}{\sum_{\ell\in\Phi_{b}\backslash B_{n,0}}D_{\ell,0}^{-\alpha}\left|h_{\ell,0}\right|^{2}+\frac{N_{0}}{P}}\;.\label{eqn:SINR}
\end{align}

In addition, let $K_{n,0}\in \{1,\cdots, K\}$ denote the number of different files requested by the users associated with $B_{n,0}$, indicating the file load  of $B_{n,0}$.   Note that when $K=1$, $K_{n,0}=1$; when $K>1$,  $K_{n,0}$ is a discrete random variable, the probability mass function (p.m.f.) of which depends on the user density $\lambda_u$.   Under the proposed multicasting scheme, each of the $K_{n,0}$ requested files is sent over bandwidth $\frac{W}{K_{n,0}}$ to all the users associated with $B_{n,0}$ and  requesting this file. Thus, the corresponding channel capacity of $u_0$ is given by $C_{n,K,0}\triangleq \frac{W}{K_{n,0}}\log_{2}\left(1+{\rm SINR}_{n,0}\right)$. The dissemination of file $n$ at rate $\tau$ can be decoded correctly at $u_0$ if $C_{n,K,0}\geq \tau$.   Then, the successful transmission probability of  file $n$ requested by $u_0$ 
is given by
 \begin{align}
 q_{K,n}(\mathbf p)\triangleq{\rm Pr}\left[C_{n,K,0}\geq\tau\right]={\rm Pr}\left[\frac{W}{K_{n,0}}\log_{2}\left(1+{\rm SINR}_{n,0}\right)\geq\tau\right].\label{eqn:succ-prob-n-def}
 \end{align} 
Please note that the distributions of random variables $K_{n,0}$ and ${\rm SINR}_{n,0}$ depend on $\mathbf p$. Later, we shall see the dependence in the analysis of these distributions. Thus, we write the successful transmission probability of  file $n$ requested by $u_0$ as a function of $\mathbf p$.
Requesters are mostly concerned with whether  their desired files can be successfully received. Therefore, in this paper, we consider the successful transmission probability  of a file requested by $u_0$, also referred to as successful transmission probability, as the network performance metric.
 According to total probability theorem,  the
successful transmission probability   under the proposed multicasting scheme is given by\footnote{Note that the traditional SINR coverage probability (which does not capture resource sharing) and rate coverage probability (which only captures resource sharing among different users) cannot reflect resource sharing among different files under the multicasting scheme, and hence are not suitable to measure the performance in our case.}
\begin{align}
q_K(\mathbf p)\triangleq&\sum_{n\in \mathcal N}a_{n}q_{K,n}(\mathbf p)=\sum_{n\in \mathcal N}a_{n}{\rm Pr}\left[\frac{W}{K_{n,0}}\log_{2}\left(1+{\rm SINR}_{n,0}\right)\geq\tau\right].\label{eqn:succ-prob-def}
\end{align}
Note that we also write $q_K(\mathbf p)$ as a function of $\mathbf p$, as  $q_{K,n}(\mathbf p)$ is a function of $\mathbf p$.

Under the proposed caching and multicasting scheme  for content-oriented services in the cache-enabled  wireless network, the successful transmission probability is sufficiently different from the traditional  rate coverage probability studied for connection-oriented services \cite{WCOM13Andrews}. In particular, the successful transmission probability considered in this paper not only depends on  the physical layer parameters, such as the BS density $\lambda_b$, user density $\lambda_u$, path loss exponent $\alpha$, bandwidth $W$  and transmit signal-to-noise ratio (SNR) $\frac{P}{N_0}$, but also relies on the information-related parameters, such as the popularity distribution $\mathbf a$, the cache size $K$ and the caching distribution $\mathbf p$. While, the traditional rate coverage probability only depends on the physical layer parameters. In addition, the successful transmission probability  depends on  the physical layer parameters in a different way from the traditional rate coverage probability. 
For example, the information-centric association leads to different distributions of the locations of  serving and interfering BSs; the multicasting transmission results in different file load distributions at each BS \cite{WCOM13Andrews}; and the cache-enabled architecture makes   the content availability related to  the BS density. 
Later, in Sections~\ref{Subsec:K1-analysis} and \ref{Subsec:K-analysis}, we shall analyze the successful transmission probability under the proposed caching and multicasting scheme, i.e., \eqref{eqn:succ-prob-def},  for the unit cache size ($K=1$) and the general cache size ($K>1$), respectively.

To further illustrate the advantage of  the proposed multicasting scheme over the traditional unicasting scheme for content-oriented services, we also introduce the rate coverage probability  under the traditional unicasting  scheme for content-oriented services.\footnote{Note that in this paper, for the tractability of the analysis and obtaining first-order design insights, we assume that all the files are delivered at the same rate $\tau$. Under this assumption, there is no intricate tradeoff between unicast and multicast. We would like to consider content transmissions at multiple bit rates and multicast grouping  in future work.} Let $L_{n,0}\in \{1,2,\cdots\}$ denote the number of users associated with $B_{n,0}$. Note that $L_{n,0}\geq K_{n,0}$ is a discrete random variable, indicating the user load  of $B_{n,0}$.   Under the traditional unicasting scheme, each of the $L_{n,0}$ users is served  over bandwidth $\frac{W}{L_{n,0}}$ separately.
Similarly, the
rate coverage probability under the traditional unicasting  scheme   is given by\footnote{Note that  $q_K^{uc}(\mathbf p)$ is for content-oriented services in a cache-enabled network and is still different from the traditional  rate coverage probability for connection-oriented services\cite{WCOM13Andrews}. The purpose of introducing $q_K^{uc}(\mathbf p)$ is for demonstrating the gain of multicasting over unicasting. We will not analyze or optimize $q_K^{uc}(\mathbf p)$ in \eqref{eqn:succ-prob-def-uni}.}
\begin{align}
q_K^{uc}(\mathbf p)
=\sum_{n\in \mathcal N}a_{n}{\rm Pr}\left[\frac{W}{L_{n,0}}\log_{2}\left(1+{\rm SINR}_{n,0}\right)\geq\tau\right].\label{eqn:succ-prob-def-uni}
\end{align}
where $\tau$ can be interpreted as the rate threshold. Similarly, please note that  the distributions of  random variables $L_{n,0}$ and ${\rm SINR}_{n,0}$  depend on $\mathbf p$.  Thus, we also write $q_K^{uc}(\mathbf p)$ as a function of $\mathbf p$.
Since the file load is smaller than or equal to the user load, i.e., $K_{n,0}\leq L_{n,0}$ with probability one, we can easily conclude $q_K(\mathbf p)>q_K^{uc}(\mathbf p)$, for any given caching distribution  $\mathbf p$.
In addition, as the user density $\lambda_u$ increases, $q_K(\mathbf p)-q_K^{uc}(\mathbf p)$ increases.
This  illustrates the advantage of  the proposed multicasting scheme over the traditional unicasting scheme for content-oriented services, especially in the high user density region.
Later, we shall evaluate $q_K^{uc}(\mathbf p)$ in \eqref{eqn:succ-prob-def-uni} numerically to demonstrate  this advantage.

\subsection{Problem Formulation}

The caching and multicasting design fundamentally affects the network performance  via the caching distribution $\mathbf{p}$.  We would like to consider the optimal caching and multicasting to maximize the successful transmission probability by carefully optimizing the design parameter $\mathbf{p}$.
Specifically, we consider the following optimization problem.
\begin{Prob} [Caching and Multicasting Optimization]\label{prob:opt}
\begin{align}
\max_{\mathbf{p}} &\quad  q_K\left(\mathbf{p}\right)\nonumber\\
s.t. & \quad \eqref{eqn:cache-constr-indiv}, \eqref{eqn:cache-constr-sum},\nonumber
\end{align}
where $q_K\left(\mathbf{p}\right)$ is given by \eqref{eqn:succ-prob-def}.
\end{Prob}

Note that in this paper, we focus on the successful transmission probability  maximization  to get first-order insights into the design of cache-enabled wireless   networks.\footnote{The optimal solution to Problem~\ref{prob:opt} may result in starvation of requesters for files with low caching probabilities. We assume these starving users can be satisfied  through other service mechanisms at additional backhaul or delay costs.}  The  optimization framework in this paper can be easily applied to address  the  QoS requirements in terms of the successful transmission probability of each file, e.g., $q_{K,n}(\mathbf p)\geq Q_{K,n}$ for all $n\in \mathcal N$.
Later, in Sections~\ref{Subsec:K1-opt} and \ref{Subsec:K-opt}, we shall solve Problem~\ref{prob:opt} for the unit cache  size ($K=1$) and the general cache size ($K>1$), respectively.

\section{Analysis and Optimization for Unit Cache Size}\label{Sec:K1}
In this section, we consider the  unit cache size, i.e., $K=1$. In this case, each combination contains only one file, and there are only $I=N$ combinations. Therefore, when $K=1$, for ease of illustration, we use the file index $n\in\mathcal N$ instead of the combination index $i\in \mathcal I$, and write the caching distribution as $\mathbf p=(p_n)_{n\in \mathcal N}$. In addition, we have $K_{n,0}=1$.  In the following, we first analyze the successful transmission probability for a given design parameter $\mathbf p$. Then, we optimize the design parameter $\mathbf p$ to maximize  the successful transmission probability.

\subsection{Performance Analysis}\label{Subsec:K1-analysis}

In this part, we analyze the successful transmission probability for given  $\mathbf p$ when $K=1$. Note that different from the traditional connection-based network, in the cache-enabled wireless  network considered in this paper, there are two types of interferers, namely, i) interfering BSs storing the file requested by  $u_{0}$   (these BSs are further than the serving BS of $u_{0}$), and ii) interfering BSs storing the other files (these BSs could be closer to $u_0$  than the serving BS of $u_{0}$).
 By carefully handling these two types of interferers, we obtain $q_{1,n}\left(\mathbf{p}\right)$  using  stochastic geometry. Substituting $q_{1,n}\left(\mathbf{p}\right)$ into \eqref{eqn:succ-prob-def}, we have the following result.
\begin{Thm} [Performance for $K=1$]
The successful transmission probability $q_1\left(\mathbf{p}\right)$ of $u_{0}$ is given by
\begin{align}\label{eq:CPrate_1file_noise}
q_1\left(\mathbf{p}\right)=\sum_{n\in \mathcal N}a_{n}f_1(p_n),
\end{align}
where
\begin{align}
f_k(x)\triangleq 2\pi\lambda_{b}x\int_{0}^{\infty}&d\exp\left(-\frac{2\pi}{\alpha}x\lambda_{b}\left(2^{\frac{k\tau}{W}}-1\right)^{\frac{2}{\alpha}}d^{2}B^{'}\left(\frac{2}{\alpha},1-\frac{2}{\alpha},2^{-\frac{k\tau}{W}}\right)\right)\nonumber\\
&\times
\exp\left(-\pi\lambda_{b}xd^{2}\right)\exp\left(- \left(2^{\frac{k\tau}{W}}-1\right)d^{\alpha}\frac{N_0}{P}\right)\nonumber\\
&\times\exp\left(-\frac{2\pi}{\alpha}\left(1-x\right)\lambda_{b}\left(2^{\frac{k\tau}{W}}-1\right)^{\frac{2}{\alpha}}d^{2}B\left(\frac{2}{\alpha},1-\frac{2}{\alpha}\right)\right){\rm d}d.\label{eqn:def-f}
\end{align}
Here, $B^{'}\left(x,y,z\right)\triangleq \int_{z}^{1}u^{x-1}\left(1-u\right)^{y-1}{\rm d}u$ is the complementary incomplete Beta function, and $B(x,y)\triangleq\int_{0}^{1}u^{x-1}\left(1-u\right)^{y-1}{\rm d}u$ is the Beta function.\label{Thm:generalK1}
\end{Thm}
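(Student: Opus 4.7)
My plan is to compute $q_{1,n}(\mathbf p)\equiv f_1(p_n)$ for a tagged file $n$ and then form $q_1(\mathbf p)=\sum_n a_n q_{1,n}(\mathbf p)$ by the total probability theorem with respect to the file request. By the independent thinning property of PPPs, the set $\Phi_b^{(n)}$ of BSs storing file $n$ is a homogeneous PPP of density $\lambda_b p_n$, while its complement inside $\Phi_b$ is an independent PPP of density $\lambda_b(1-p_n)$. Under the information-centric association, the serving BS $B_{n,0}$ is the nearest point of $\Phi_b^{(n)}$, so by the standard void-probability argument the distance $D_{0,0}$ has density $f_{D_{0,0}}(d)=2\pi\lambda_b p_n\, d\, \exp(-\pi\lambda_b p_n d^2)$, which is exactly the first and third factors in $f_1(p_n)$.

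Condition on $D_{0,0}=d$ and let $\beta\triangleq 2^{\tau/W}-1$ (so $K_{n,0}=1$ collapses the rate constraint to ${\rm SINR}_{n,0}\geq\beta$). Since $|h_{0,0}|^2\sim\mathrm{Exp}(1)$, the conditional success probability is
\begin{align*}
\Pr[|h_{0,0}|^2\geq \beta d^\alpha (I+N_0/P)\mid D_{0,0}=d]=e^{-\beta d^\alpha N_0/P}\,\mathcal{L}_I(\beta d^\alpha),
\end{align*}
where $\mathcal{L}_I$ is the Laplace transform of the aggregate interference. The noise factor recovers the $\exp(-(2^{\tau/W}-1)d^\alpha N_0/P)$ term in $f_1(p_n)$.

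The crucial step is to split $I=I_1+I_2$ into two independent contributions that correspond to the two types of interferers flagged in the paragraph preceding the theorem. $I_1$ comes from $\Phi_b^{(n)}\setminus\{B_{n,0}\}$; because $B_{n,0}$ is the nearest point of $\Phi_b^{(n)}$, these interferers lie entirely outside the ball $B(0,d)$. $I_2$ comes from the complementary PPP of density $\lambda_b(1-p_n)$, whose points are unconstrained in location and can therefore be closer to $u_0$ than $B_{n,0}$. By independence of the two thinned processes, $\mathcal{L}_I=\mathcal{L}_{I_1}\mathcal{L}_{I_2}$. Applying the PGFL of a PPP together with $E[e^{-s|h|^2}]=(1+s)^{-1}$ and $s=\beta d^\alpha$,
\begin{align*}
\mathcal{L}_{I_1}(\beta d^\alpha)&=\exp\!\Bigl(-2\pi\lambda_b p_n\!\int_d^\infty \frac{\beta d^\alpha r}{\beta d^\alpha+r^\alpha}\,dr\Bigr),\\
\mathcal{L}_{I_2}(\beta d^\alpha)&=\exp\!\Bigl(-2\pi\lambda_b(1-p_n)\!\int_0^\infty \frac{\beta d^\alpha r}{\beta d^\alpha+r^\alpha}\,dr\Bigr).
\end{align*}
With the change of variables $t=\beta d^\alpha/(\beta d^\alpha+r^\alpha)$ (so that $r=d$ maps to $t=1-2^{-\tau/W}$ and $r=\infty$ maps to $t=0$), followed by $u=1-t$, the first integral becomes $(d^2\beta^{2/\alpha}/\alpha)\,B'(2/\alpha,1-2/\alpha,2^{-\tau/W})$; the same substitution with lower limit $r=0$ gives $(d^2\beta^{2/\alpha}/\alpha)\,B(2/\alpha,1-2/\alpha)$ for the second. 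These are precisely the two remaining exponentials in $f_1(p_n)$.

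Multiplying the serving-BS density, the noise factor, and the two Laplace transforms, integrating over $d\in(0,\infty)$, and averaging over $n$ with weight $a_n$, yields \eqref{eq:CPrate_1file_noise}. The main obstacle I anticipate is not any single calculation but rather the careful bookkeeping: justifying that conditional on $D_{0,0}=d$ the residual process $\Phi_b^{(n)}\setminus\{B_{n,0}\}$ is still a PPP of density $\lambda_b p_n$ restricted to $\{\|x\|>d\}$ (a standard consequence of Slivnyak's theorem combined with the void condition on $B(0,d)$), and that it remains independent of the complementary thinned PPP which is unrestricted in space. Once the two integration regions are correctly identified as $(d,\infty)$ versus $(0,\infty)$, the Beta-function reduction is routine.
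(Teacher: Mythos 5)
Your proposal is correct and follows essentially the same route as the paper's Appendix A: thinning into the file-$n$ PPP and its complement, conditioning on $D_{0,0}=d$ with the nearest-point distance density, factoring the Laplace transforms of the two interference types (with integration regions $(d,\infty)$ and $(0,\infty)$), and reducing to the Beta functions via the same change of variables. The only cosmetic difference is that you aggregate the other-file interferers into a single PPP of density $(1-p_n)\lambda_b$, whereas the paper keeps a product over the individual $\Phi_{b,m}$ of densities $p_m\lambda_b$ — equivalent by superposition.
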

\begin{proof} Please refer to Appendix A.
\end{proof}

From Theorem~\ref{Thm:generalK1}, we can see that in the general region, the physical  layer parameters $\alpha$, $W$, $\lambda_b$, $\frac{P}{N_0}$ and the caching distribution $\mathbf p$ jointly affect the successful transmission probability $q_1\left(\mathbf{p}\right)$. The impacts of the physical layer parameters   and the caching distribution  on $q_1\left(\mathbf{p}\right)$  are coupled in a complex manner. 

The successful transmission probability $q_1\left(\mathbf{p}\right)$ increases with the transmit SNR $\frac{P}{N_0}$.  From  Theorem~\ref{Thm:generalK1},   we have the following corollary.

\begin{Cor} [Asymptotic Performance for $K=1$]
When $\frac{P}{N_0}\to \infty$, the successful transmission probability of $u_{0}$ is given by
\begin{align}
q_{1,\infty}\left(\mathbf{p}\right)\triangleq \lim_{\frac{P}{N_0}\to \infty}q_1\left(\mathbf{p}\right)=\sum_{n\in \mathcal N}\frac{a_np_n}{c_{2,1}+c_{1,1}p_n},
\label{eqn:CPrate_1file_nonoise}
\end{align}
where
\begin{align}
c_{1,k}\triangleq &1+\frac{2}{\alpha}\left(2^{\frac{k\tau}{W}}-1\right)^{\frac{2}{\alpha}}B^{'}\left(\frac{2}{\alpha},1-\frac{2}{\alpha},2^{-\frac{k\tau}{W}}\right)-\frac{2}{\alpha}\left(2^{\frac{k\tau}{W}}-1\right)^{\frac{2}{\alpha}}B\left(\frac{2}{\alpha},1-\frac{2}{\alpha}\right),\label{eqn:c1-def}\\
c_{2,k}\triangleq &\frac{2}{\alpha}\left(2^{\frac{k\tau}{W}}-1\right)^{\frac{2}{\alpha}} B\left(\frac{2}{\alpha},1-\frac{2}{\alpha}\right).\label{eqn:c2-def}
\end{align}
\label{Cor:generalK1}
\end{Cor}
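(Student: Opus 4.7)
The plan is to derive Corollary~1 directly from Theorem~1 by evaluating the limit of $f_1(p_n)$ as $N_0/P \to 0$ and then recognizing the resulting expression as $\frac{p_n}{c_{2,1}+c_{1,1}p_n}$.

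First I would observe that in the integrand of $f_k(x)$ defined in \eqref{eqn:def-f}, the only factor depending on $N_0/P$ is the noise exponential $\exp\!\left(-(2^{k\tau/W}-1)d^{\alpha}\tfrac{N_0}{P}\right)$, which tends to $1$ pointwise in $d$ as $\tfrac{P}{N_0}\to\infty$. Since the remaining factors are dominated by the Gaussian-type factor $\exp(-\pi\lambda_b x d^2)$ (noting that the two other exponents are also negative and grow like $-d^2$), an application of dominated convergence justifies passing the limit inside the integral. This reduces $\lim_{P/N_0\to\infty}f_k(x)$ to an integral of the form $2\pi\lambda_b x\int_0^\infty d\,\exp(-\pi\lambda_b A_k(x)\,d^2)\,\mathrm{d}d$, where the coefficient $A_k(x)$ collects all the $d^2$ terms from the three surviving exponentials.

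Next I would identify $A_k(x)$ by grouping the three $d^2$ coefficients:
\begin{align*}
A_k(x) = x + \tfrac{2}{\alpha}x\bigl(2^{k\tau/W}-1\bigr)^{2/\alpha}B'\!\bigl(\tfrac{2}{\alpha},1-\tfrac{2}{\alpha},2^{-k\tau/W}\bigr)+\tfrac{2}{\alpha}(1-x)\bigl(2^{k\tau/W}-1\bigr)^{2/\alpha}B\!\bigl(\tfrac{2}{\alpha},1-\tfrac{2}{\alpha}\bigr).
\end{align*}
A short rearrangement, splitting the $(1-x)$ factor into $1-x$ times the Beta term and absorbing the $x$-proportional piece into the first line, shows $A_k(x) = c_{2,k} + c_{1,k}x$ with $c_{1,k}$ and $c_{2,k}$ as defined in \eqref{eqn:c1-def}--\eqref{eqn:c2-def}. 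Then I would compute the remaining elementary Gaussian-in-$d^2$ integral, giving $2\pi\lambda_b x\cdot\tfrac{1}{2\pi\lambda_b A_k(x)} = \tfrac{x}{c_{2,k}+c_{1,k}x}$.

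Finally, specializing to $k=1$ and substituting into \eqref{eq:CPrate_1file_noise} yields $q_{1,\infty}(\mathbf p)=\sum_{n\in\mathcal N}\tfrac{a_n p_n}{c_{2,1}+c_{1,1}p_n}$, as claimed. The main (and essentially the only nontrivial) step is the algebraic identification $A_k(x)=c_{2,k}+c_{1,k}x$; the interchange of limit and integral and the Gaussian integration are routine. I would therefore present the argument with most of the detail devoted to that coefficient-matching computation.
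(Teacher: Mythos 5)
Your proposal is correct and follows essentially the same route as the paper's Appendix B: drop the noise exponential in the limit, collect the three $d^2$ coefficients into $c_{2,1}+c_{1,1}p_n$, and evaluate the elementary integral $\int_0^\infty d\,e^{-cd^2}\,\mathrm{d}d=\tfrac{1}{2c}$. Your explicit dominated-convergence justification for interchanging the limit and the integral is a small addition the paper omits, but the substance is identical.
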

\begin{proof} Please refer to Appendix B.
\end{proof}

From Corollary~\ref{Cor:generalK1}, we can see that in the high SNR region, the impact of the physical layer parameters $\alpha$  and $W$, captured by $c_{1,1}$ and $c_{2,1}$, and the impact of the caching distribution  $\mathbf p$ on   $q_{1,\infty}\left(\mathbf{p}\right)$ can be easily separated. Later, in Section~\ref{Subsec:K1-opt}, we shall see that this separation greatly facilitates the optimization of $q_{1,\infty}\left(\mathbf{p}\right)$ over $\mathbf p$.

Fig.~\ref{fig:verification-K1} plots the successful transmission probability  versus the transmit  SNR $\frac{P}{N_0}$.  From Fig.~\ref{fig:verification-K1}, we can see that the analytical curve for the general transmit SNR   is very close to the Monte Carlo curve. In addition, we can see that the analytical curve for the general transmit SNR asymptotically approaches the analytical curve for the asymptotic  transmit SNR. These observations verify Theorem~\ref{Thm:generalK1} and Corollary~\ref{Cor:generalK1}, respectively. From Fig.~\ref{fig:verification-K1}, we can also observe that  $q_{1,\infty}\left(\mathbf{p}\right)$ provides a simple and good approximation for $q_1\left(\mathbf{p}\right)$ in the high transmit SNR region (e.g., $\frac{P}{N_0}\geq 40$ dB). On the other hand, Fig.~\ref{fig:verification-K1} shows that   $q_1(\mathbf p)$, which does not depend on user density $\lambda_u$ (when $K=1$), is greater than  $q_1^{uc}(\mathbf p)$, which depends on the user density $\lambda_u$, as discussed in Section~\ref{Subsec:perfm}. This demonstrates the advantage of the proposed multicasting scheme over the traditional unicasting scheme.

\begin{figure}
\begin{center}
 \includegraphics[width=7cm]{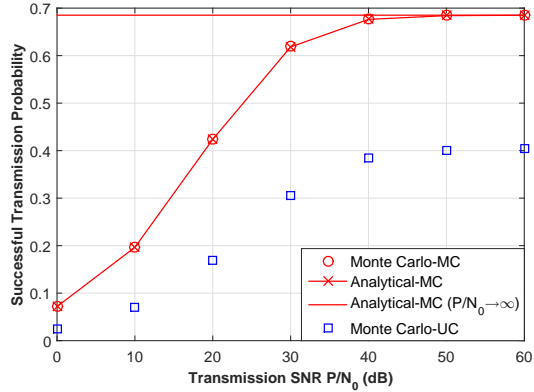}
  \end{center}
    \caption{\small{Successful transmission probability versus transmit SNR $\frac{P}{N_0}$ at $K=1$. $\lambda_b = 0.01$, $\lambda_u = 0.1$, $\alpha = 4$, $W = 10\times 10^6$, $\tau = 5\times10^5$, $N=5$, $\mathbf p = (0.6811, 0.3189,  0, 0,  0)$, and $a_n=\frac{n^{-\gamma}}{\sum_{n\in \mathcal N}n^{-\gamma}}$  with $\gamma =2$. MC stands for multicast, while UC stands for unicast. In  this paper, to simulate the large-scale network,  we use a 2-dimensional square of area  $260^2$, which is sufficiently large in our case. Note that if the simulation window size is not large enough, the observed interference would be smaller than the true interference due to the edge effect, resulting in larger successful transmission probability than the true value. In addition, the Monte Carlo results are obtained by averaging over $4\times10^6$ random realizations.}}
\label{fig:verification-K1}
\end{figure}

\subsection{Performance Optimization}\label{Subsec:K1-opt}

In this part, we study the successful transmission probability optimization in Problem \ref{prob:opt} for $K=1$, with $q_1\left(\mathbf{p}\right)$  in  \eqref{eq:CPrate_1file_noise} being the objective function.
 In general, it is difficult to ensure the convexity of $q_1\left(\mathbf{p}\right)$, as it is in a very complex form.   On the other hand, $q_1\left(\mathbf{p}\right)$ is differentiable, and  the constraints in
\eqref{eqn:cache-constr-indiv} and \eqref{eqn:cache-constr-sum}  are linear. Thus, Problem \ref{prob:opt} for $K=1$   is an optimization of a differentiable (non-convex) function over a convex set.  A local optimal solution can be obtained using  standard gradient projection methods\cite[pp. 223]{Bertsekasbooknonlinear:99}.
Here, we consider the diminishing stepsize\cite[pp. 227]{Bertsekasbooknonlinear:99} satisfying
\begin{align}
\epsilon(t)\to 0\ \text{as}\ t\to \infty,\  \sum_{t=1}^{\infty}\epsilon(t)=\infty, \sum_{t=1}^{\infty}\epsilon(t)^2<\infty, \label{eqn:stepcond}
\end{align}
and propose Algorithm \ref{alg:localK1} to obtain a local optimal solution. It is shown in \cite[pp. 229]{Bertsekasbooknonlinear:99} that $\mathbf p(t)\to \mathbf p^{\dagger}$ as $t\to \infty$, where $ \mathbf p^{\dagger}$ is a local optimal solution to Problem \ref{prob:opt} for $K=1$.
\begin{algorithm} [Local Optimal Solution for $K=1$]
\caption{Local Optimal Solution for $K=1$}
\begin{algorithmic}[1]
\STATE Initialize  $t=1$ and $p_n(1)=\frac{1}{N}$  for all $n\in \mathcal N$.
 \STATE   For all $n\in \mathcal N$, compute $\bar p_n(t+1)$ according to $\bar p_n(t+1)=p_n(t)+\epsilon(t)\frac {\partial q_1\left(\mathbf{p}(t)\right)}{\partial p_n(t)}$,
where  $\{\epsilon(t)\}$ satisfies \eqref{eqn:stepcond}.
 \STATE For all $n\in \mathcal N$, compute $p_n(t+1)$ according to $p_n(t+1)=\min\left\{\left[\bar p_n(t+1)-\nu^*\right]^+,1\right\}$,
where $\nu^*$ satisfies $\sum_{n\in \mathcal N}\min\left\{\left[\bar p_n(t+1)-\nu^*\right]^+,1\right\}=1$.
 \STATE Set $t=t+1$ and go to Step 2.
\end{algorithmic}\label{alg:localK1}
\end{algorithm}

In Step 2 of Algorithm \ref{alg:localK1}, $\frac {\partial q_1\left(\mathbf{p}(t)\right)}{\partial p_n(t)}=a_nf_1'(p_n)$, where
\begin{align}\label{eq:derivative-fk}
f_k'(x)=\frac{f_k(x)}{x}+2\pi\lambda_{b}x\int_{0}^{\infty}&d\exp\left(-\frac{2\pi}{\alpha}x\lambda_{b}\left(2^{\frac{k\tau}{W}}-1\right)^{\frac{2}{\alpha}}d^{2}B^{'}\left(\frac{2}{\alpha},1-\frac{2}{\alpha},2^{-\frac{k\tau}{W}}\right)\right)\nonumber\\
&\times
\exp\left(-\pi\lambda_{b}xd^{2}\right)\exp\left(- \left(2^{\frac{k\tau}{W}}-1\right)d^{\alpha}\frac{N_0}{P}\right)\nonumber\\
&\times\exp\left(-\frac{2\pi}{\alpha}\left(1-x\right)\lambda_{b}\left(2^{\frac{k\tau}{W}}-1\right)^{\frac{2}{\alpha}}d^{2}B\left(\frac{2}{\alpha},1-\frac{2}{\alpha}\right)\right)\nonumber\\
&\times\Bigg(-\frac{2\pi}{\alpha}\lambda_{b}\left(2^{\frac{k\tau}{W}}-1\right)^{\frac{2}{\alpha}}d^{2}B^{'}\left(\frac{2}{\alpha},1-\frac{2}{\alpha},2^{-\frac{k\tau}{W}}\right)-\pi\lambda_{b}d^{2}\nonumber\\
&+\frac{2\pi}{\alpha}\lambda_{b}\left(2^{\frac{k\tau}{W}}-1\right)^{\frac{2}{\alpha}}d^{2}B\left(\frac{2}{\alpha},1-\frac{2}{\alpha}\right)\Bigg)
{\rm d}d.
\end{align}
Note that Step 3 of Algorithm \ref{alg:localK1} is the projection of $\bar p_n(t+1)$ onto the set of the  variables satisfying the constraints in \eqref{eqn:cache-constr-indiv} and \eqref{eqn:cache-constr-sum}. In other words, $p_n(t+1)$ given in Step 3  is the solution to the following optimization problem\cite[pp. 201]{Bertsekasbooknonlinear:99}.\footnote{This can be easily shown using KKT conditions. We omit the details due to page limitation.}
\begin{align}
\min_{\mathbf p}&\quad ||\mathbf p-\bar{\mathbf p}(t+1)||^2\nonumber\\
s.t.& \quad \eqref{eqn:cache-constr-indiv},  \eqref{eqn:cache-constr-sum}.\nonumber
\end{align}
Here, $||\cdot||$ denotes the  Euclidean norm.

As computing the gradient of $q_1\left(\mathbf{p}\right)$ in each iteration involves very high complexity,  Algorithm \ref{alg:localK1} may not be suitable for  this problem when $N$ is large. 
Therefore, instead, we focus on obtaining an asymptotically (global) optimal solution when $\frac{P}{N_0}\to \infty$.  Specifically, we consider the optimization of the  asymptotic successful transmission probability, i.e., Problem \ref{prob:opt}  with  $q_{1,\infty}\left(\mathbf{p}\right)$  in \eqref{eqn:CPrate_1file_nonoise} being the objective function.
\begin{Prob} [Asymptotic  Optimization for $K=1$]\label{prob:opt-1-infty}
\begin{align}
\max_{\mathbf{p}} &\quad  q_{1,\infty}\left(\mathbf{p}\right)\nonumber\\
s.t. & \quad \eqref{eqn:cache-constr-indiv}, \eqref{eqn:cache-constr-sum}.\nonumber
\end{align}
\end{Prob}

The optimal solution $\mathbf p^{*}\triangleq (p_n^*)_{n\in \mathcal N}$ to Problem~\ref{prob:opt-1-infty} is an asymptotically optimal solution to Problem \ref{prob:opt} for $K=1$, and the resulting  successful transmission probability $q^*_{1,\infty}\triangleq q_{1,\infty}(\mathbf p^{*})$ is the asymptotically optimal successful transmission probability to Problem~\ref{prob:opt} for $K=1$. In the following, we   focus on solving Problem~\ref{prob:opt-1-infty} to obtain an asymptotically optimal solution to Problem \ref{prob:opt} for $K=1$.

It can be easily seen that Problem \ref{prob:opt-1-infty} is convex  and Slater's condition is satisfied, implying that strong duality holds. Using KKT conditions, we can solve Problem \ref{prob:opt-1-infty}.

\begin{Thm} [Asymptotically Optimal Solution for $K=1$]
An asymptotically optimal solution  $\mathbf p^*=(p_n^*)_{n\in \mathcal N}$ to Problem~\ref{prob:opt} for $K=1$  is given by
\begin{align}
p_n^*=\left[\frac{1}{c_{1,1}}\sqrt{\frac{a_nc_{2,1}}{\nu^*}}-\frac{c_{2,1}}{c_{1,1}}\right]^+, \ n\in \mathcal N,\label{eqn:opt-1-infty}
\end{align}
where $[x]^+\triangleq \max\{x,0\}$ and $\nu^*$ satisfies
\begin{align}
\sum_{n\in \mathcal N}\left[\frac{1}{c_{1,1}}\sqrt{\frac{a_nc_{2,1}}{\nu^*}}-\frac{c_{2,1}}{c_{1,1}}\right]^+=1.\label{eqn:opt-1-infty-v}
\end{align}
Here, $c_{1,1}$ and $c_{2,1}$ are given by \eqref{eqn:c1-def} and \eqref{eqn:c2-def}, respectively.
\label{Thm:solu-opt-1-infty}
\end{Thm}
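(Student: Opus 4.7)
The plan is to recognize that Problem~\ref{prob:opt-1-infty} is a strictly convex maximization over the probability simplex and then derive the water-filling-type formula by applying KKT conditions, essentially in the style of the ``cache probability'' results in \cite{ICC15Giovanidis,Malak14}.

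First I would verify the structural property stated just before the theorem, namely that the problem is convex. Writing each summand in $q_{1,\infty}(\mathbf p)$ as
\begin{equation*}
\frac{a_n p_n}{c_{2,1}+c_{1,1}p_n}=\frac{a_n}{c_{1,1}}-\frac{a_n c_{2,1}/c_{1,1}}{c_{2,1}+c_{1,1}p_n},
\end{equation*}
one sees that it is concave in $p_n$ on $[0,1]$ (since $1/(c_{2,1}+c_{1,1}p_n)$ is convex for $c_{1,1},c_{2,1}>0$), so $q_{1,\infty}$ is concave. The feasible set defined by \eqref{eqn:cache-constr-indiv}--\eqref{eqn:cache-constr-sum} is a simplex; Slater's condition is satisfied by $p_n=1/N$, so strong duality holds and the KKT conditions are necessary and sufficient.

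Next I would introduce Lagrange multipliers $\nu\in\mathbb R$ for \eqref{eqn:cache-constr-sum} and $\mu_n,\eta_n\geq 0$ for the lower and upper bounds in \eqref{eqn:cache-constr-indiv}. A direct differentiation gives
\begin{equation*}
\frac{\partial}{\partial p_n}\!\left[\frac{a_n p_n}{c_{2,1}+c_{1,1}p_n}\right]=\frac{a_n c_{2,1}}{(c_{2,1}+c_{1,1}p_n)^2},
\end{equation*}
so the stationarity condition reads $\dfrac{a_n c_{2,1}}{(c_{2,1}+c_{1,1}p_n^*)^2}-\nu^*+\mu_n^*-\eta_n^*=0$ for every $n$. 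I would handle the upper bound constraint by observing that $\sum_n p_n^*=1$ together with $p_n^*\geq 0$ forces $p_n^*\leq 1$ automatically, so $\eta_n^*=0$ without loss of generality; hence only the nonnegativity constraint is potentially active. With $\eta_n^*=0$, the complementary slackness condition $\mu_n^* p_n^*=0$ splits into two cases: if $p_n^*>0$, then $\mu_n^*=0$ and solving for $p_n^*$ gives $p_n^*=\tfrac{1}{c_{1,1}}\sqrt{a_n c_{2,1}/\nu^*}-\tfrac{c_{2,1}}{c_{1,1}}$; if $p_n^*=0$, then $\mu_n^*\geq 0$ yields $\tfrac{1}{c_{1,1}}\sqrt{a_n c_{2,1}/\nu^*}-\tfrac{c_{2,1}}{c_{1,1}}\leq 0$. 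These two cases collapse into the single water-filling expression \eqref{eqn:opt-1-infty}.

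Finally, I would pin down $\nu^*$ by substituting \eqref{eqn:opt-1-infty} into \eqref{eqn:cache-constr-sum}, which gives \eqref{eqn:opt-1-infty-v}; a monotonicity argument (the left-hand side of \eqref{eqn:opt-1-infty-v} is continuous and strictly decreasing in $\nu^*$ on the relevant range, tending to $\infty$ as $\nu^*\downarrow 0$ and to $0$ as $\nu^*\to\infty$) guarantees a unique $\nu^*>0$. The main obstacle, such as it is, is purely bookkeeping: handling the upper-bound multipliers $\eta_n^*$ cleanly and confirming that the ``$[\,\cdot\,]^+$'' form correctly merges the active and inactive branches of the KKT system. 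Since $q_{1,\infty}$ is strictly concave on the face where the chosen support is fixed, the resulting $\mathbf p^*$ is globally optimal for Problem~\ref{prob:opt-1-infty}, hence asymptotically optimal for Problem~\ref{prob:opt} when $K=1$.
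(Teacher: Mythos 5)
Your proposal is correct and follows essentially the same route as the paper's Appendix C: convexity plus Slater's condition to justify KKT, the stationarity condition $\frac{a_n c_{2,1}}{(c_{2,1}+c_{1,1}p_n^*)^2}+\eta_n^*-\nu^*=0$ with complementary slackness splitting into the $p_n^*>0$ and $p_n^*=0$ cases, and the sum constraint determining $\nu^*$. The only (harmless) additions beyond the paper's argument are your explicit dismissal of the upper-bound multipliers and the monotonicity remark guaranteeing uniqueness of $\nu^*$.
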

\begin{proof} Please refer to Appendix C.
\end{proof}

\begin{Rem} [Interpretation of Theorem~\ref{Thm:solu-opt-1-infty}] As illustrated in Fig.~\ref{Fig:optstructure} (a), the asymptotically optimal solution $\mathbf p^*$ given by Theorem~\ref{Thm:solu-opt-1-infty}  has a reverse water-filling structure. The file popularity distribution $\mathbf a$ and the physical layer parameters (captured in  $c_{1,1}$ and $c_{2,1}$)  jointly affect   $\nu^*$.  Given  $\nu^*$, the physical layer parameters (captured in  $c_{1,1}$ and $c_{2,1}$) affect the  caching probabilities of all the files  in the same way, while the popularity of file $n$  (i.e., $a_n$) only affects the caching probability of file $n$  (i.e., $p^*_n$).
From Theorem~\ref{Thm:solu-opt-1-infty}, we have  $p_1^*\geq p_2^*\geq\cdots\geq p_N^*$, as $a_{1}\ge a_{2}\ldots\ge a_{N}$. In other words, files of higher popularity get more storage resources. In addition, there may exist $\bar n$ ($1<\bar n<N$) such that $p_n^*>0$ for all  $n<\bar n$, and $p_n^*=0$ for all  $n\geq\bar n$. In other words, some files of lower popularity may not be stored.   For a popularity distribution with a heavy tail, $\bar n$ is large, i.e.,   more  different files  can be stored.
\end{Rem}

\begin{figure}
\begin{center}
  \subfigure[\small{$\mathbf p^*$ for $K=1$.}]
  {\resizebox{7cm}{!}{\includegraphics{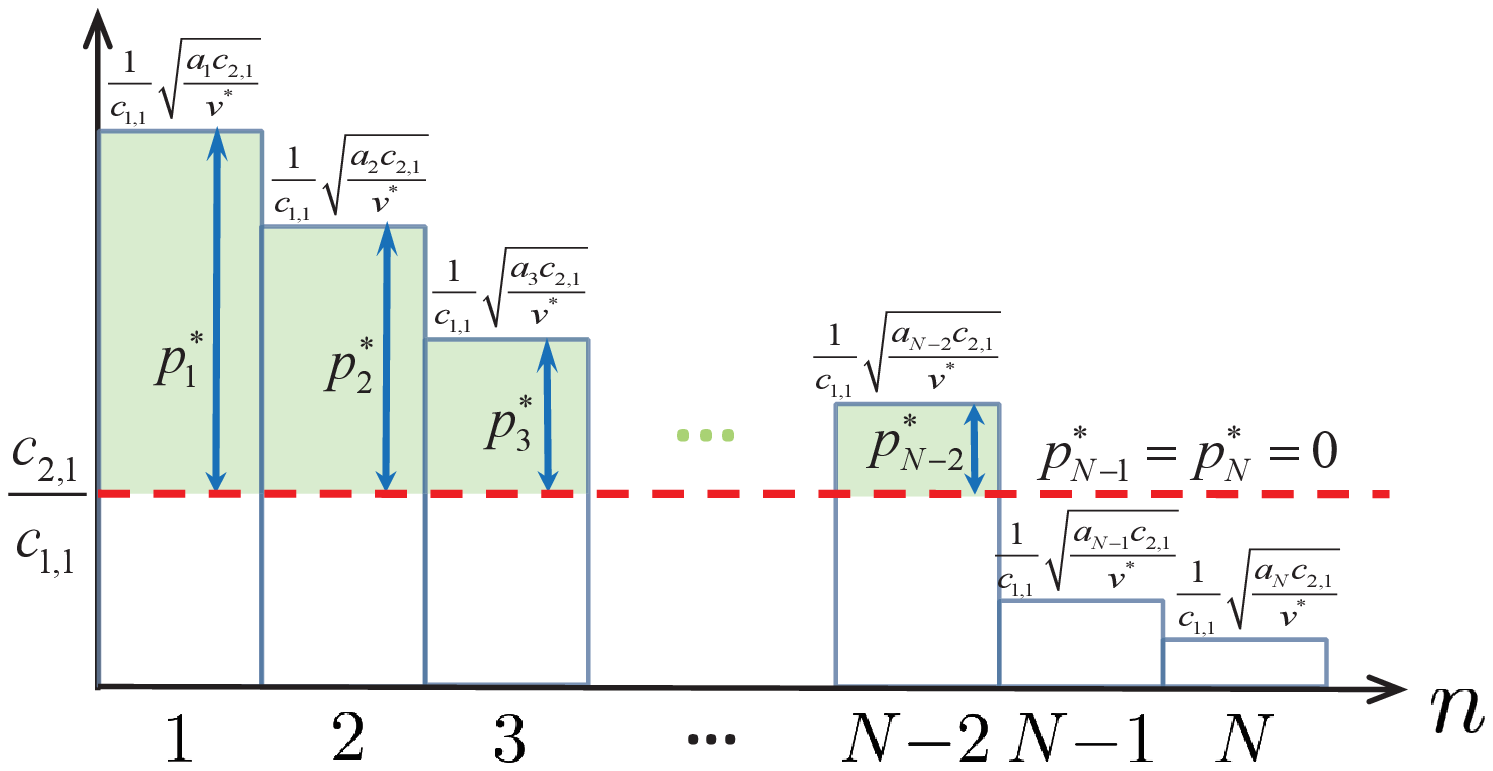}}}
\quad
  \subfigure[\small{$\mathbf T^*$ for $K>1$.}]
  {\resizebox{7cm}{!}{\includegraphics{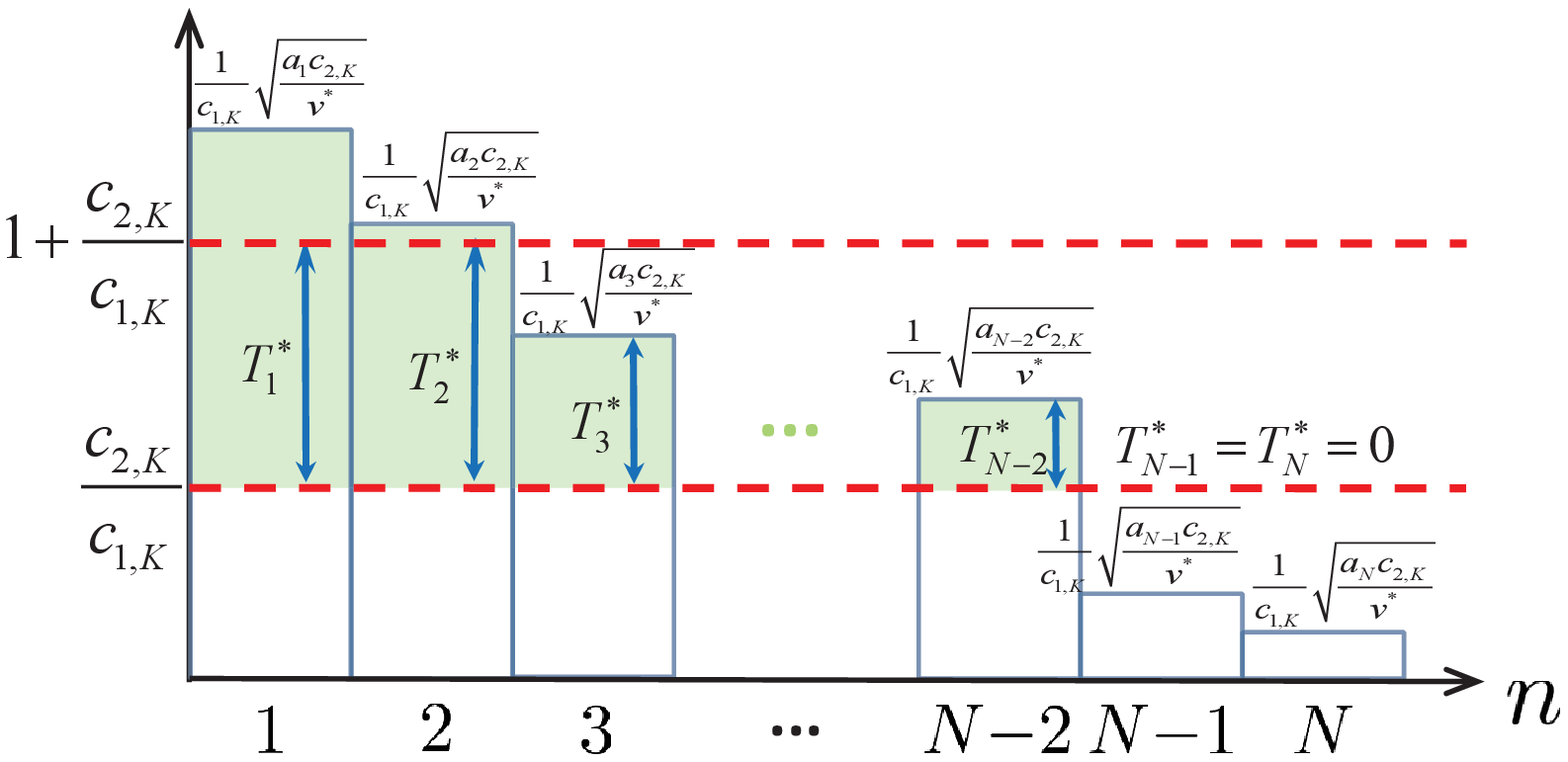}}}
  \end{center}
    \caption{Illustration of the optimality structure.}
\label{Fig:optstructure}
\end{figure}


As the water-level in the traditional water-filling power control,   the root $\nu^*$ to the equation in \eqref{eqn:opt-1-infty-v} can be easily solved. Thus, we can efficiently compute   $\mathbf p^*$ based on Theorem~\ref{Thm:solu-opt-1-infty}. In addition,  from Theorem~\ref{Thm:solu-opt-1-infty}, we have the following corollary.

 \begin{Cor}  [Asymptotically Optimal Solution for $K=1$]
If
$\frac{\sqrt{a_{N}}}{\sum_{n\in \mathcal N}\sqrt{a_{n}}}>\frac{c_{2,1}}{c_{1,1}+c_{2,1}N}$, then we have

\begin{align}
&p_{n}^{*}=\left(1+\frac{c_{2,1}}{c_{1,1}}N\right)\frac{\sqrt{a_{n}}}{\sum_{n\in \mathcal N}\sqrt{a_{n}}}-\frac{c_{2,1}}{c_{1,1}}>0,\ n \in \mathcal N,\\
&q^*_{1,\infty}=\frac{1}{c_{1,1}}\left(1-\frac{(\sum_{n\in \mathcal N}\sqrt{a_{n}})^2}{N+\frac{c_{1,1}}{c_{2,1}}}\right).
\end{align}
Here, $c_{1,1}$ and $c_{2,1}$ are given by \eqref{eqn:c1-def} and \eqref{eqn:c2-def}, respectively.
\label{Cor:solu-opt-1-infty}
\end{Cor}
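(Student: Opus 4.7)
The plan is to deduce the corollary directly from Theorem~\ref{Thm:solu-opt-1-infty} by showing that under the stated hypothesis the reverse water-filling expression \eqref{eqn:opt-1-infty} is positive for every $n$, so that the $[\,\cdot\,]^+$ operators can be dropped and the dual variable $\nu^*$ can be solved for in closed form.

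First I would exploit the monotonicity assumption $a_1\ge a_2\ge\cdots\ge a_N$: the quantity inside the bracket in \eqref{eqn:opt-1-infty}, namely $\tfrac{1}{c_{1,1}}\sqrt{a_n c_{2,1}/\nu^*}-\tfrac{c_{2,1}}{c_{1,1}}$, is increasing in $a_n$, so the smallest value over $n\in\mathcal N$ is attained at $n=N$. Thus showing $p_N^*>0$ simultaneously guarantees $p_n^*>0$ for every $n$. I would then temporarily assume this is the case, drop the $[\,\cdot\,]^+$ in \eqref{eqn:opt-1-infty-v}, and solve the resulting linear equation
\begin{align}
\sum_{n\in\mathcal N}\left(\tfrac{1}{c_{1,1}}\sqrt{a_n c_{2,1}/\nu^*}-\tfrac{c_{2,1}}{c_{1,1}}\right)=1
\end{align}
for $\sqrt{c_{2,1}/\nu^*}$, obtaining $\sqrt{c_{2,1}/\nu^*}=(c_{1,1}+c_{2,1}N)/\sum_{k}\sqrt{a_k}$. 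Plugging this back into \eqref{eqn:opt-1-infty} yields the advertised formula
$p_n^*=(1+\tfrac{c_{2,1}}{c_{1,1}}N)\tfrac{\sqrt{a_n}}{\sum_{k}\sqrt{a_k}}-\tfrac{c_{2,1}}{c_{1,1}}$.

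Next I would check that the positivity assumption is consistent. Evaluating the closed form at $n=N$ and requiring it to be strictly positive gives exactly $\tfrac{\sqrt{a_N}}{\sum_{k}\sqrt{a_k}}>\tfrac{c_{2,1}}{c_{1,1}+c_{2,1}N}$, which is the hypothesis of the corollary. Hence under this hypothesis the candidate $\mathbf p^*$ has all positive entries and satisfies \eqref{eqn:opt-1-infty}, so by Theorem~\ref{Thm:solu-opt-1-infty} it is the asymptotically optimal solution.

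Finally I would obtain $q^*_{1,\infty}$ by substituting the explicit $p_n^*$ into \eqref{eqn:CPrate_1file_nonoise}. The key simplification is that the denominator collapses nicely: a direct computation gives $c_{2,1}+c_{1,1}p_n^*=\tfrac{\sqrt{a_n}(c_{1,1}+c_{2,1}N)}{\sum_{k}\sqrt{a_k}}$, so each summand $\tfrac{a_n p_n^*}{c_{2,1}+c_{1,1}p_n^*}$ reduces to $\tfrac{a_n}{c_{1,1}}-\tfrac{\sqrt{a_n}\,c_{2,1}(\sum_{k}\sqrt{a_k})}{c_{1,1}(c_{1,1}+c_{2,1}N)}$. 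Summing over $n$ and using $\sum_n a_n=1$ gives the stated expression after factoring out $1/c_{1,1}$. I do not anticipate a genuine obstacle here; the only place where care is needed is confirming the equivalence between the threshold condition and the positivity of $p_N^*$, since this is what lets us strip the $[\,\cdot\,]^+$ before inverting the sum constraint.
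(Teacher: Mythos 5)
Your proposal is correct and is exactly the derivation the paper intends (the paper gives no separate proof, presenting the corollary as a direct specialization of Theorem~\ref{Thm:solu-opt-1-infty}): drop the $[\cdot]^+$ under the positivity hypothesis, solve \eqref{eqn:opt-1-infty-v} for $\nu^*$ in closed form, verify that positivity of $p_N^*$ is equivalent to the stated threshold condition, and substitute into \eqref{eqn:CPrate_1file_nonoise}. All of your intermediate algebra (the value of $\sqrt{c_{2,1}/\nu^*}$, the collapse of $c_{2,1}+c_{1,1}p_n^*$, and the final sum) checks out.
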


Note that the condition  $\frac{\sqrt{a_{N}}}{\sum_{n\in \mathcal N}\sqrt{a_{n}}}>\frac{c_{2,1}}{c_{1,1}+c_{2,1}N}$  means that the popularity distribution has a relatively heavy tail.
Under this condition, all files are stored, and interestingly, the caching probability of  file $n$ increases  linearly with   $\frac{\sqrt{a_{n}}}{\sum_{n\in \mathcal N}\sqrt{a_{n}}}$  instead of $a_n$.

\section{Analysis and Optimization for General Cache Size}\label{Sec:K}
In this section, we consider the general cache size, i.e., $K>1$. We first analyze the successful transmission probability for a given design parameter $\mathbf p$, by adopting appropriate approximations. Then, we study   the  maximization of  the successful transmission probability over $\mathbf p$.

\subsection{Performance Analysis}\label{Subsec:K-analysis}

In this part, we would like to analyze the successful transmission probability $q_K(\mathbf p)$ in  \eqref{eqn:succ-prob-def} for a given design parameter $\mathbf p$ when $K>1$.  In general, for all $n\in\mathcal N$,  file load $K_{n,0}$ and SINR ${\rm SINR}_{n,0}$ are correlated, as BSs with larger association regions have higher file load and lower SINR (due to larger user to BS distance) \cite{AndrewsTWCOffloading14}. However, the exact relationship between $K_{n,0}$ and ${\rm SINR}_{n,0}$ is very complex and is still not known. For the tractability of the analysis, as in \cite{AndrewsTWCOffloading14} and \cite{WCOM13Andrews}, the dependence is ignored.
Then, from \eqref{eqn:succ-prob-def}, we have
\begin{align}
q_K(\mathbf p)
=&\sum_{n\in \mathcal N}a_{n}\sum_{k=1}^K\Pr \left[K_{n,0}=k\right]{\rm Pr}\left[\frac{W}{k}\log_{2}\left(1+{\rm SINR}_{n,0}\right)\geq\tau \right].
\end{align}
First, we calculate the p.m.f. of random file load $K_{n,0}$.  In calculating  $\Pr \left[K_{n,0}=k\right]$, we need
the probability density function (p.d.f.) of the size of the Voronoi cell of $B_{n,0}$ w.r.t. file $m\in \mathcal N_{i,-n}\triangleq  \mathcal N_i \setminus \{n\}$  when  $B_{n,0}$ contains combination $i\in \mathcal I_n$. However, this p.d.f. is very complex and is still unknown. For the tractability of  the analysis,  we approximate this p.d.f. based on a tractable approximated form of the  p.d.f. of the size of the Voronoi cell to which a randomly chosen user belongs\cite{SGcellsize13}, which is widely used in existing literature\cite{AndrewsTWCOffloading14,WCOM13Andrews}. Under this approximation, we calculate  the p.m.f. of $K_{n,0}$. The accuracy of the approximation will be demonstrated in Fig.~\ref{fig:verification-Kmulti} and Table~\ref{tab:Error}.
\begin{Lem} [p.m.f. of $K_{n,0}$] The p.m.f. of $K_{n,0}$ is given by
\begin{align}
\Pr \left[K_{n,0}=k\right]\approx&\sum_{i\in \mathcal I_n}\frac{p_i}{T_n}\sum_{\mathcal N_i^1\in \mathcal{SN}_i^1(k-1) }G_{n,i}(\mathcal N_i^1,\mathbf T_{i,-n}), \ k=1,\cdots, K,\label{eqn:K-pmf}
\end{align}
where
$\mathcal{SN}_i^1(k-1)\triangleq \left\{\mathcal N_i^1 \subseteq \mathcal N_{i,-n} :|\mathcal N_i^1|=k-1\right\}$,
$\mathbf T_{i,-n}\triangleq\left(T_m\right)_{m\in \mathcal N_{i,-n}}$,
$W_m(T_m)\triangleq1+3.5^{-1}\frac{a_m\lambda_u}{T_m\lambda_b}$ and $G_{n,i}(\mathcal N_i^1,\mathbf T_{i,-n})
\triangleq \prod_{m\in \mathcal N_i^1}\left(1-W_m(T_m)^{-4.5}\right)\prod_{m\in {\mathcal N_{i,-n}\setminus \mathcal N_i^1}}W_m(T_m)^{-4.5}.$
\label{Lem:pmf-K}
\end{Lem}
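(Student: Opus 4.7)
The plan is to compute $\Pr[K_{n,0}=k]$ by first conditioning on the combination stored at $B_{n,0}$ and then, for each of the other $K-1$ files in that combination, determining whether at least one associated user requests it. Since each BS independently draws its combination from $\mathbf p$, the independent marking property of the PPP implies that, given $B_{n,0}$ stores some combination in $\mathcal I_n$ (which holds by construction, as $B_{n,0}$ serves a request for file $n$), the conditional probability that $B_{n,0}$ stores combination $i\in \mathcal I_n$ equals $p_i/T_n$. The total probability theorem then yields the outer sum $\sum_{i\in \mathcal I_n}(p_i/T_n)\,\Pr[K_{n,0}=k\mid B_{n,0}\text{ stores }i]$, which is precisely the outer structure of \eqref{eqn:K-pmf}.

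Conditional on $B_{n,0}$ storing combination $i$, I would write $K_{n,0}=1+|\mathcal R|$, where $\mathcal R\subseteq \mathcal N_{i,-n}$ is the random set of files $m\neq n$ requested by at least one user associated with $B_{n,0}$. For each $m\in \mathcal N_{i,-n}$, a user requests file $m$ from $B_{n,0}$ iff the user requests $m$ (an independent thinning of the user PPP to density $a_m\lambda_u$) and $B_{n,0}$ is the nearest BS storing $m$, i.e., the user lies in the Voronoi cell $A_m$ of $B_{n,0}$ in the tessellation of density $T_m\lambda_b$ generated by BSs that store $m$. Given $|A_m|$, the number of such users is Poisson with mean $a_m\lambda_u|A_m|$, so $\Pr[m\notin \mathcal R\mid i]=\mathbb E\bigl[\exp(-a_m\lambda_u|A_m|)\bigr]$.

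The main obstacle is the law of $|A_m|$: $B_{n,0}$ is selected by proximity to $u_0$ in the file-$n$ tessellation, so its cell in a different file-$m$ tessellation is neither a typical Poisson-Voronoi cell nor a standard user-centric one, and its exact area distribution is unknown. Following the approximation flagged in the preamble to the lemma, I would replace the law of $|A_m|$ by the tractable Gamma approximation of \cite{SGcellsize13} for the area of the Voronoi cell, in a PPP of density $T_m\lambda_b$, to which a uniformly chosen user belongs, namely Gamma$(4.5,\,3.5 T_m\lambda_b)$. Its Laplace transform at $s=a_m\lambda_u$ equals $(1+a_m\lambda_u/(3.5T_m\lambda_b))^{-4.5}=W_m(T_m)^{-4.5}$, which simultaneously recovers the $-4.5$ exponent and the definition of $W_m(T_m)$ appearing in the statement. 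A second approximation, also standard in stochastic-geometric load analyses, is to treat the events $\{m\notin \mathcal R\}$ across $m\in \mathcal N_{i,-n}$ as mutually independent, even though the cells $\{A_m\}$ across different file tessellations are geometrically correlated through their common tagged point $B_{n,0}$.

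Under these two approximations, $\Pr[\mathcal R=\mathcal N_i^1\mid B_{n,0}\text{ stores }i]$ factorizes as $\prod_{m\in \mathcal N_i^1}(1-W_m(T_m)^{-4.5})\prod_{m\in \mathcal N_{i,-n}\setminus \mathcal N_i^1}W_m(T_m)^{-4.5}$, which is exactly $G_{n,i}(\mathcal N_i^1,\mathbf T_{i,-n})$. Summing over all $\mathcal N_i^1\in \mathcal{SN}_i^1(k-1)$ (so that $K_{n,0}=1+(k-1)=k$) and then over $i\in \mathcal I_n$ with weight $p_i/T_n$ delivers \eqref{eqn:K-pmf}. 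Aside from the two approximation steps just described, the derivation uses only independent thinning, elementary conditioning, and the total probability theorem.
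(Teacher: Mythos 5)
Your proposal is correct and follows essentially the same route as the paper's Appendix D: conditioning on the stored combination with weight $p_i/T_n$, introducing a per-file indicator of whether file $m$ is requested at $B_{n,0}$, treating these indicators as independent across $m$, and invoking the Voronoi cell-size approximation of \cite{SGcellsize13} to get $\Pr[m\notin\mathcal R\mid i]\approx W_m(T_m)^{-4.5}$. You are in fact somewhat more explicit than the paper, which cites Lemma 3 of \cite{SGcellsize13} directly for that value and leaves the cross-file independence assumption implicit in the factorization.
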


\begin{proof}
Please refer to Appendix D. \end{proof}

From Lemma~\ref{Lem:pmf-K}, we have the following corollary.
\begin{Cor} [p.m.f. of $K_{n,0}$ when $\lambda_u\to \infty$] When $\lambda_u\to \infty$, 
we have
\begin{align}
\lim_{\lambda_u\to \infty}\Pr \left[K_{n,0}=k\right]=
\begin{cases}
0, & k=1,\cdots, K-1\\
1, & k=K
\end{cases}.\label{eqn:K-pmf-aymp}
\end{align}\label{Cor:pmf-K-asymp}
\end{Cor}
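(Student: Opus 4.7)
The plan is to read off the limit directly from the approximate p.m.f. in Lemma~\ref{Lem:pmf-K} by letting $\lambda_u\to\infty$ inside the factors $W_m(T_m)^{-4.5}$ and $1-W_m(T_m)^{-4.5}$, which are the only $\lambda_u$-dependent pieces of the expression. Since $T_m,\lambda_b,a_m$ are fixed positive constants, $W_m(T_m)=1+3.5^{-1}a_m\lambda_u/(T_m\lambda_b)\to\infty$, so $W_m(T_m)^{-4.5}\to 0$ and $1-W_m(T_m)^{-4.5}\to 1$ for every $m\in\mathcal N_{i,-n}$.

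For the case $k<K$, I would observe that in the definition of $G_{n,i}(\mathcal N_i^1,\mathbf T_{i,-n})$ the set $\mathcal N_{i,-n}\setminus\mathcal N_i^1$ has cardinality $K-1-(k-1)=K-k\ge 1$, so the product $\prod_{m\in\mathcal N_{i,-n}\setminus\mathcal N_i^1}W_m(T_m)^{-4.5}$ contains at least one factor tending to $0$. The other product is bounded by $1$. Hence each $G_{n,i}(\mathcal N_i^1,\mathbf T_{i,-n})\to 0$ as $\lambda_u\to\infty$; since the outer sums over $i\in\mathcal I_n$ and $\mathcal N_i^1\in\mathcal{SN}_i^1(k-1)$ are finite and the prefactors $p_i/T_n$ are bounded, the approximate $\Pr[K_{n,0}=k]\to 0$.

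For $k=K$, the only allowed choice is $\mathcal N_i^1=\mathcal N_{i,-n}$, so $\mathcal{SN}_i^1(K-1)=\{\mathcal N_{i,-n}\}$ and the second product in $G_{n,i}$ is empty (equal to $1$). Therefore
\begin{align}
G_{n,i}(\mathcal N_{i,-n},\mathbf T_{i,-n})=\prod_{m\in\mathcal N_{i,-n}}\bigl(1-W_m(T_m)^{-4.5}\bigr)\longrightarrow 1.\nonumber
\end{align}
Using $\sum_{i\in\mathcal I_n}p_i=T_n$ from \eqref{eqn:def-T-n}, the approximate p.m.f. at $k=K$ tends to $\sum_{i\in\mathcal I_n}(p_i/T_n)=1$, giving the claimed limit \eqref{eqn:K-pmf-aymp}.

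There is essentially no hard step: the whole argument is a dominated-convergence/finite-sum limit once Lemma~\ref{Lem:pmf-K} is invoked. The only mild care needed is to verify the counting identity $|\mathcal N_{i,-n}\setminus\mathcal N_i^1|=K-k$ (so that the product really has at least one vanishing factor when $k<K$) and to note that the $k=K$ case can equivalently be obtained from the normalization $\sum_{k=1}^K\Pr[K_{n,0}=k]=1$, providing a consistency check on the approximation used in Lemma~\ref{Lem:pmf-K}.
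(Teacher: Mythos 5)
Your proposal is correct and follows exactly the route the paper intends: the corollary is stated as an immediate consequence of Lemma~\ref{Lem:pmf-K}, and your computation (each $W_m(T_m)^{-4.5}\to 0$, the residual product for $k<K$ has $K-k\ge 1$ vanishing factors, and the $k=K$ term collapses to $\sum_{i\in\mathcal I_n}p_i/T_n=1$) supplies precisely the omitted details. The only implicit point worth noting is that $T_m>0$ is automatic for every $m\in\mathcal N_i$ whenever $p_i>0$ (since $T_m\ge p_i$), so the factors $W_m(T_m)$ are well defined in all terms that actually contribute.
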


By Corollary~\ref{Cor:pmf-K-asymp}, we know  that the (approximated) random variable $K_{n,0}\in\{1,\cdots, K\}$ coverages to the constant $K$ in distribution, as $\lambda_u\to \infty$. Note that in the high user density region, the exact file load for each BS turns to $K$. Thus, Corollary~\ref{Cor:pmf-K-asymp}  indicates  that the approximation error  used for obtaining the p.m.f. of $K_{n,0}$ in Lemma~\ref{Lem:pmf-K}  is asymptotically negligible.


Next, we calculate $\Pr\left[\frac{W}{k}\log_{2}\left(1+{\rm SINR}_{n,0}\right)\geq\tau\right]$. Similar to the case for $K=1$, there are two types of interferers, namely, i) interfering BSs storing the combinations containing the file requested by $u_{0}$, and ii) interfering BSs without  the desired file of $u_0$. By carefully handling these two types of interferers, we obtain $\Pr\left[\frac{W}{k}\log_{2}\left(1+{\rm SINR}_{n,0}\right)\geq\tau\right]=f_k(T_n)$  using stochastic geometry. Then, based on Lemma~\ref{Lem:pmf-K} and $\Pr\left[\frac{W}{k}\log_{2}\left(1+{\rm SINR}_{n,0}\right)\geq\tau\right]=f_k(T_n)$, we can obtain $q_K\left(\mathbf{p}\right)$ for $K>1$.
\begin{Thm} [Performance for $K>1$]
The successful transmission probability $q_K\left(\mathbf{p}\right)$ of $u_{0}$ is given by
\begin{align}\label{eq:CPrate_multifile_noise}
q_K\left(\mathbf{p}\right)=\sum_{n\in \mathcal N}a_{n} \sum_{k=1}^K \Pr [K_{n,0}=k]f_k(T_n),
\end{align}
where $\Pr [K_{n,0}=k]$ is given by Lemma~\ref{Lem:pmf-K}, $f_k(T_n)$ is given by \eqref{eqn:def-f} and $T_n$ is given by \eqref{eqn:def-T-n}.\label{Thm:generalKmulti}
\end{Thm}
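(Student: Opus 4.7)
The plan is to reduce the statement to two ingredients already in hand: Lemma~\ref{Lem:pmf-K} for the p.m.f.\ of $K_{n,0}$, and a conditional SINR-coverage computation that mirrors the proof of Theorem~\ref{Thm:generalK1} (Appendix~A) with the rate threshold scaled by the file load $k$. First I would start from the definition in \eqref{eqn:succ-prob-def}, and invoke the independence approximation between $K_{n,0}$ and ${\rm SINR}_{n,0}$ (already justified in the paper by citing \cite{AndrewsTWCOffloading14,WCOM13Andrews}). Applying total probability by conditioning on $K_{n,0}=k$ gives
\begin{align*}
q_K(\mathbf p) \;\approx\; \sum_{n\in\mathcal N} a_n \sum_{k=1}^K \Pr[K_{n,0}=k]\, \Pr\!\left[\tfrac{W}{k}\log_2(1+{\rm SINR}_{n,0})\geq \tau\right].
\end{align*}
The first factor is supplied by Lemma~\ref{Lem:pmf-K}, so the remaining work is to identify the second factor with $f_k(T_n)$.

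To establish $\Pr[\tfrac{W}{k}\log_2(1+{\rm SINR}_{n,0})\geq \tau]=f_k(T_n)$, I would rewrite the event as ${\rm SINR}_{n,0}\geq 2^{k\tau/W}-1$, mirroring the analysis in Appendix~A but with $\tau$ replaced by $k\tau$. The key structural observation is that from the perspective of a user requesting file $n$, what determines association and interference geometry is only whether a given BS stores some combination containing file $n$; by the independent thinning property of the PPP $\Phi_b$, this partitions the BSs into two independent PPPs, one of density $T_n\lambda_b$ (BSs storing file $n$) and one of density $(1-T_n)\lambda_b$ (BSs not storing it). The serving BS $B_{n,0}$ is the nearest point of the first PPP, so the distance $D_{0,0}$ has the standard Rayleigh-void p.d.f.\ $2\pi T_n\lambda_b d\exp(-\pi T_n\lambda_b d^2)$, which supplies the outer factor in \eqref{eqn:def-f}.

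Conditioning on $D_{0,0}=d$, exploiting the exponential distribution of $|h_{0,0}|^2$, and handling noise by the factor $\exp(-(2^{k\tau/W}-1)d^\alpha N_0/P)$, the coverage probability reduces to the product of two Laplace-transform-of-interference terms evaluated at $s=(2^{k\tau/W}-1)d^\alpha$. Type-1 interferers (other BSs storing file $n$) lie outside the disk of radius $d$, which upon integration produces the complementary incomplete Beta function $B'(\tfrac{2}{\alpha},1-\tfrac{2}{\alpha},2^{-k\tau/W})$ with a factor of $T_n$ in front. Type-2 interferers (BSs not storing file $n$) form an independent PPP of density $(1-T_n)\lambda_b$ over all of $\mathbb{R}^2$, yielding the complete Beta function $B(\tfrac{2}{\alpha},1-\tfrac{2}{\alpha})$ with a factor $(1-T_n)$. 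Assembling these three exponential factors and integrating over $d$ gives exactly $f_k(T_n)$ as defined in \eqref{eqn:def-f}, completing the identification.

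The main obstacle is the book-keeping for the two classes of interferers: one must be careful that the PPP of type-1 interferers is conditioned to lie outside the ball of radius $d$ (this is where $B'$ rather than $B$ appears), while the type-2 PPP is unconditioned and contributes over all radii. The routine parts — applying the probability generating functional of the PPP, switching to polar coordinates, and recognizing the standard integrals $\int_{1}^\infty \frac{u\,du}{1+s^{-1}u^\alpha}$ in terms of the Beta and incomplete Beta functions — are essentially identical to the calculation carried out in Appendix~A for $K=1$, so I would cite that derivation rather than repeat it, noting only the substitution $\tau\leftarrow k\tau$ and $p_n\leftarrow T_n$.
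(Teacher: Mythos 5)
Your proposal is correct and follows essentially the same route as the paper's Appendix~E: decompose via the independence approximation and the p.m.f.\ from Lemma~\ref{Lem:pmf-K}, then repeat the Appendix~A computation with the two interferer classes thinned into independent PPPs of densities $T_n\lambda_b$ and $(1-T_n)\lambda_b$, the threshold $\tau$ replaced by $k\tau$, and $p_n$ replaced by $T_n$ throughout. The paper likewise writes the conditional coverage as a product of the two Laplace transforms (one integrated from $d$ outward, yielding $B'$, one over all radii, yielding $B$) times the noise factor, and then de-conditions against the Rayleigh-void density of $D_{0,0}$, exactly as you describe.
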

\begin{proof} Please refer to Appendix E.
\end{proof}

Similarly, from Theorem~\ref{Thm:generalKmulti}, we can see that in the general region, the physical  layer parameters $\alpha$, $W$, $\lambda_b$, $\lambda_u$ and $\frac{P}{N_0}$ and the caching distribution $\mathbf p$ jointly affect the successful transmission probability $q_K\left(\mathbf{p}\right)$, by affecting $\Pr [K_{n,0}=k]$ and $f_k(T_n)$ for all $k=1,\cdots, K$.  The impacts of the physical layer parameters   and  the caching  distribution on $q_K\left(\mathbf{p}\right)$  are coupled in a complex manner.

The successful transmission probability
$q_K\left(\mathbf{p}\right)$ increases with 
$\frac{P}{N_0}$  and   decreases with  $\lambda_u$. From  Theorem~\ref{Thm:generalKmulti} and  Corollary~\ref{Cor:pmf-K-asymp},
we have the following corollary.

\begin{Cor}[Asymptotic Performance for $K>1$]
When $\frac{P}{N_{0}}\to\infty$  and $\lambda_u\to \infty$, the successful transmission probability  of $u_{0}$ is given by
\begin{align}\label{eq:CPrate_multifile_nonoise}
q_{K,\infty}\left(\mathbf{p}\right)\triangleq \lim_{\lambda_u\to \infty,\frac{P}{N_{0}}\to\infty}q_K\left(\mathbf{p}\right)=\sum_{n\in \mathcal N}\frac{a_{n}T_n}{c_{2,K}+c_{1,K}T_n}.
\end{align}
Here, $c_{1,K}$ and $c_{2,K}$ are given by \eqref{eqn:c1-def} and \eqref{eqn:c2-def}, respectively,  and $T_n$ is given by \eqref{eqn:def-T-n}.
\label{Cor:generalKmulti}
\end{Cor}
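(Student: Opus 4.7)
The plan is to take the two limits in sequence, exploiting the fact that Theorem~\ref{Thm:generalKmulti} already expresses $q_K(\mathbf{p})$ as a finite linear combination of the quantities $\Pr[K_{n,0}=k]$ and $f_k(T_n)$, so the limits decouple cleanly. The dependence on $\lambda_u$ sits entirely inside $\Pr[K_{n,0}=k]$ (see Lemma~\ref{Lem:pmf-K}), while the dependence on $\frac{P}{N_0}$ sits entirely inside $f_k$ (via the noise factor in \eqref{eqn:def-f}).

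First I would take $\lambda_u\to\infty$. By Corollary~\ref{Cor:pmf-K-asymp}, $\Pr[K_{n,0}=k]\to 0$ for $k<K$ and $\Pr[K_{n,0}=K]\to 1$. Since the outer sum over $n\in\mathcal N$ is finite, the inner sum over $k$ is finite, and the factors $f_k(T_n)\in[0,1]$ are uniformly bounded, the limit can be exchanged with both sums, yielding
\begin{align}
\lim_{\lambda_u\to\infty} q_K(\mathbf{p}) \;=\; \sum_{n\in\mathcal N} a_n\, f_K(T_n).\nonumber
\end{align}

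Next I would take $\frac{P}{N_0}\to\infty$ inside $f_K(T_n)$. The only factor in the integrand of \eqref{eqn:def-f} that depends on $\frac{P}{N_0}$ is $\exp\!\left(-(2^{K\tau/W}-1)d^\alpha \frac{N_0}{P}\right)$, which increases monotonically to $1$. The remaining $d$-dependence is the Gaussian-type factor $d\,\exp(-A d^2)$ with $A>0$, which is integrable on $(0,\infty)$ and dominates the integrand. By monotone (or dominated) convergence, the limit passes inside the integral, leaving
\begin{align}
\lim_{\frac{P}{N_0}\to\infty} f_K(x) \;=\; 2\pi\lambda_b x\int_0^\infty d\,\exp(-A d^2)\,{\rm d}d \;=\; \frac{\pi\lambda_b x}{A},\nonumber
\end{align}
where $A$ collects the three exponential rates in \eqref{eqn:def-f} evaluated at $k=K$.

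The final step is the algebraic identification of $A$. Grouping terms,
\begin{align}
\frac{A}{\pi\lambda_b} \;=\; x \,+\, \tfrac{2}{\alpha}x(2^{K\tau/W}-1)^{2/\alpha} B'\!\left(\tfrac{2}{\alpha},1-\tfrac{2}{\alpha},2^{-K\tau/W}\right) \,+\, \tfrac{2}{\alpha}(1-x)(2^{K\tau/W}-1)^{2/\alpha} B\!\left(\tfrac{2}{\alpha},1-\tfrac{2}{\alpha}\right).\nonumber
\end{align}
Using \eqref{eqn:c1-def} and \eqref{eqn:c2-def} to rewrite $\tfrac{2}{\alpha}(2^{K\tau/W}-1)^{2/\alpha} B'(\cdot) = c_{1,K}+c_{2,K}-1$ and $\tfrac{2}{\alpha}(2^{K\tau/W}-1)^{2/\alpha} B(\cdot) = c_{2,K}$, the three terms telescope to $A/(\pi\lambda_b) = c_{1,K}x + c_{2,K}$. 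Substituting back gives $\lim_{P/N_0\to\infty} f_K(x) = x/(c_{2,K}+c_{1,K}x)$, and evaluating at $x=T_n$ produces the claimed expression \eqref{eq:CPrate_multifile_nonoise}.

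I expect the only mildly subtle point to be the justification that the two limits may be interchanged and passed under the integral; everything else is a finite sum, a standard Gaussian integral, and bookkeeping with the definitions of $c_{1,K}$ and $c_{2,K}$. The cancellation that turns the three integrand rates into the neat form $c_{1,K}x + c_{2,K}$ is the one spot worth writing out carefully, as it is precisely this simplification that makes the asymptotic expression tractable for the subsequent optimization.
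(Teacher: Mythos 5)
Your proposal is correct and follows essentially the same route as the paper, which omits the details but points to the proof of Corollary~\ref{Cor:generalK1} (noise factor $\to 1$, Gaussian integral $\int_0^\infty d\,e^{-cd^2}\,{\rm d}d = \tfrac{1}{2c}$, and regrouping of the exponential rates into $c_{1,K}x+c_{2,K}$) combined with the file-load concentration of Corollary~\ref{Cor:pmf-K-asymp}. Your explicit bookkeeping of the constants and the justification for passing the limits through the finite sums and the integral match the intended argument.
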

\begin{proof}
Corollary~\ref{Cor:generalKmulti} can be proved in a similar way to Corollary~\ref{Cor:generalK1}. We omit the details due to page limitation.
\end{proof}

\begin{figure}
\begin{center}
  \subfigure[\small{Successful transmission probability versus $\frac{P}{N_0}$ at  $\lambda_u = 0.1$.}]
  {\resizebox{7cm}{!}{\includegraphics{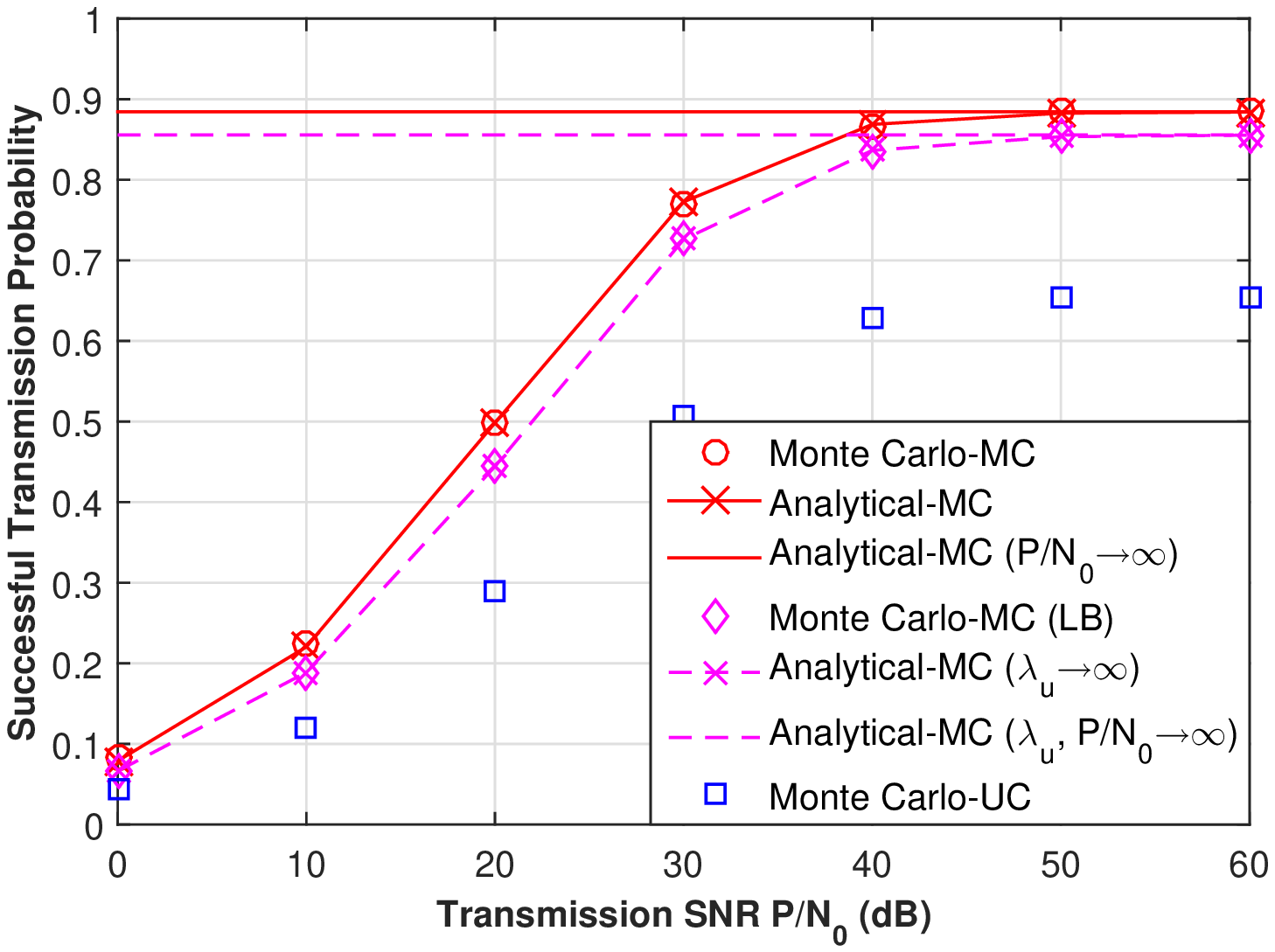}}}\quad
  \subfigure[\small{Successful transmission probability versus  $\lambda_u$ at $\frac{P}{N_0}=30$ dB.}]
  {\resizebox{7cm}{!}{\includegraphics{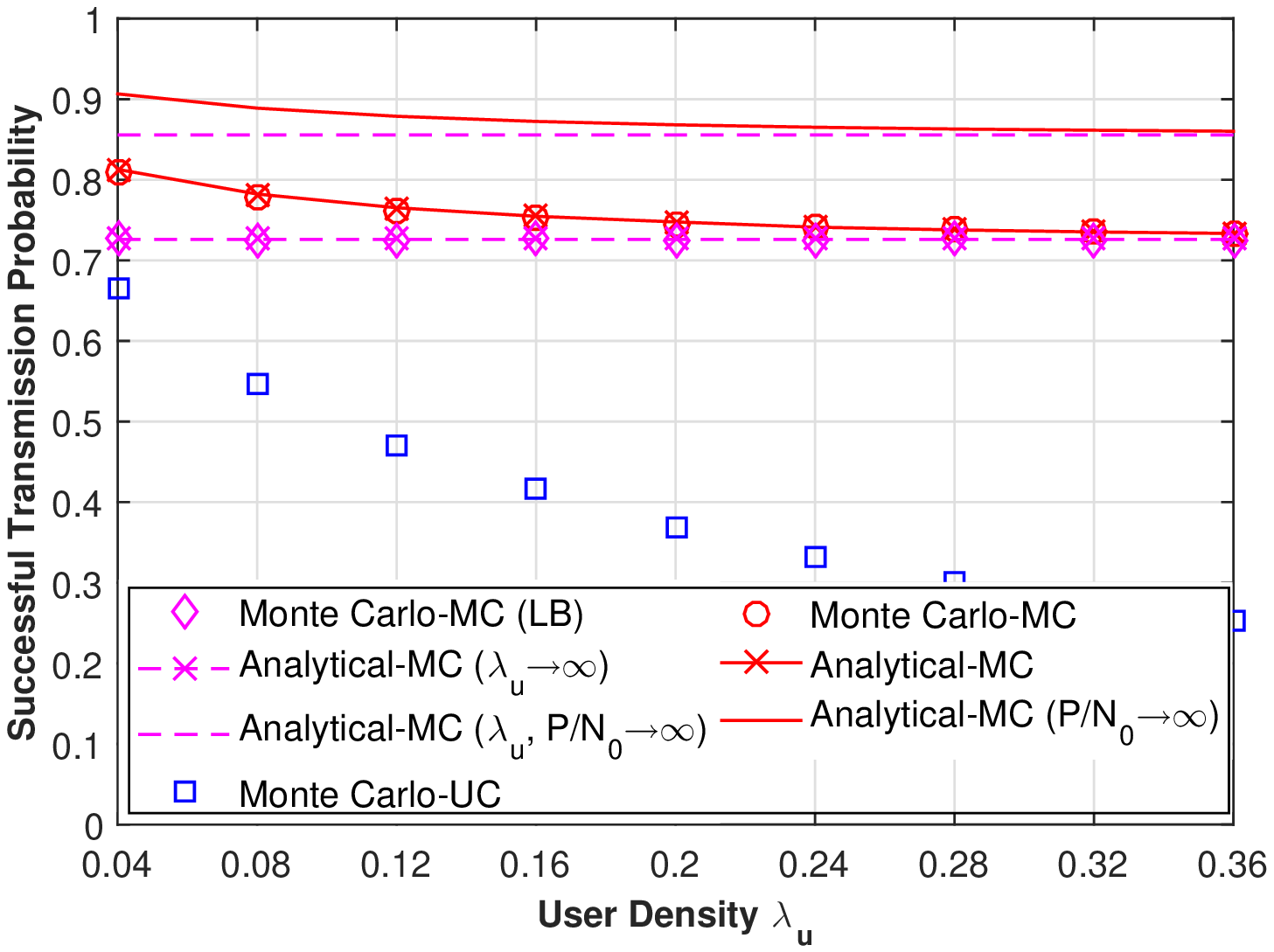}}}
  \end{center}
    \caption{\small{Successful transmission probability versus transmit SNR $\frac{P}{N_0}$  and user density $\lambda_u$ at $K=4$.  $\lambda_b = 0.01$, $\alpha = 4$, $W = 10\times 10^6$, $\tau = 5\times10^5$, $N=5$, $\mathbf p = (0.6811, 0.3189,  0, 0,  0)$, and $a_n=\frac{n^{-\gamma}}{\sum_{n\in \mathcal N}n^{-\gamma}}$  with $\gamma =2$. Here, $\mathcal N_1=\{1,2,3,4\}$, $\mathcal N_2=\{1,2,3,5\}$, $\mathcal N_3=\{1,2,4,5\}$, $\mathcal N_4=\{1,3,4,5\}$, and $\mathcal N_5=\{2,3,4,5\}$.}}
\label{fig:verification-Kmulti}
\end{figure}

Similarly, from Corollary~\ref{Cor:generalKmulti}, we can see that in the high SNR and user density region, the impact of the physical layer parameters $\alpha$  and $W$, captured by $c_{1,K}$ and $c_{2,K}$, and the impact of the caching distribution   $\mathbf p$ on the successful transmission probability $q_{K,\infty}\left(\mathbf{p}\right)$ can be easily separated. Later, in Section~\ref{Subsec:K-opt}, we shall see that this separation greatly facilitates the optimization of $q_{K,\infty}\left(\mathbf{p}\right)$.

Fig.~\ref{fig:verification-Kmulti} plots the successful transmission probability for $K>1$ versus the  transmit  SNR $\frac{P}{N_0}$ and the user density $\lambda_u$.
 Fig.~\ref{fig:verification-Kmulti}  verifies Theorem~\ref{Thm:generalKmulti} and Corollary~\ref{Cor:generalKmulti}, and demonstrates the accuracy of the approximations adopted.
 Fig.~\ref{fig:verification-Kmulti} also indicates that $q_{K,\infty}\left(\mathbf{p}\right)$  provides a simple and good approximation for  $q_K\left(\mathbf{p}\right)$  in  the high transmit SNR region (e.g., $\frac{P}{N_0}\geq 40$ dB) and the high user density region (e.g., $\lambda_u\geq 0.16$) .
On the other hand, Fig.~\ref{fig:verification-Kmulti} shows that  $q_K(\mathbf p)$  is greater than  $q_K^{uc}(\mathbf p)$; $q_K(\mathbf p)$ and  $q_K^{uc}(\mathbf p)$ both decrease with  $\lambda_u$; and  the gap   $q_K(\mathbf p)-q_K^{uc}(\mathbf p)$ increases with $\lambda_u$. These observations verify the discussions  in Section~\ref{Subsec:perfm} and demonstrate the advantage of the proposed multicasting scheme over the traditional unicasting scheme.
Table~\ref{tab:Error} further illustrates the approximation error 
versus the number of files $N$. From Table~\ref{tab:Error},  we can see that under the parameters considered in the simulation, the approximation error is very small, and there is no observable increase of the approximation error with $N$.

\begin{table}[h]
 \centering
 \begin{tabular}{|c|c|c|c|c|c|}
 \hline
 $N$ & 200 & 400 & 600 & 800 & 1000 \\
 \hline
Monte Carlo-MC & 0.5051 & 0.4822 & 0.4705 & 0.4636 & 0.4582 \\
\hline
Analytical-MC & 0.5035 & 0.4803 & 0.4691 & 0.4620 & 0.4568 \\
\hline
Approximation Error & 0.0016 & 0.0019 & 0.0014 & 0.0016 & 0.0014 \\
\hline
Relative Approximation Error & 0.32\% & 0.39\% & 0.30\% & 0.35\% & 0.31\% \\
 \hline
 \end{tabular}
 \caption{\small{Approximation error between the analytical successful transmission probability in Theorem~\ref{Thm:generalKmulti} and the numerical successful transmission probability. $W=10\times10^6$, $\tau=5\times10^5$, $\lambda_b=0.01$, $\lambda_u=0.1$, $K=20$, $\alpha=4$, $\frac{P}{N_0}=30$ dB,  $a_n=\frac{n^{-\gamma}}{\sum_{n\in \mathcal N}n^{-\gamma}}$  with $\gamma =1.2$, and $\mathbf p= \mathbf p^*$ given by Problem~\ref{Prob:asm-improvement}.}}\label{tab:Error}
\end{table}

\subsection{Performance Optimization}\label{Subsec:K-opt}
In this part, we consider the successful transmission probability optimization in Problem~\ref{prob:opt} for $K>1$, with $q_K\left(\mathbf{p}\right)$  in  \eqref{eq:CPrate_multifile_noise} being the objective function.
Similarly, Problem \ref{prob:opt} for $K>1$   is an optimization of a differentiable (non-convex) function over a convex set.  A local optimal solution can be obtained using the standard gradient projection method in  Algorithm \ref{alg:localK}. It is shown in \cite[pp. 229]{Bertsekasbooknonlinear:99} that $\mathbf p(t)\to \mathbf p^{\dagger}$ as $t\to \infty$, where $ \mathbf p^{\dagger}$ is a local optimal solution to Problem \ref{prob:opt} for $K>1$.

\begin{algorithm}[t]
\caption{Local Optimal Solution for $K>1$}
\begin{algorithmic}[1]
\STATE Initialize  $t=1$ and $p_i(1)=\frac{1}{I}$  for all $i\in \mathcal I$.
 \STATE   For all $i\in \mathcal I$, compute $\bar p_i(t+1)$ according to $\bar p_i(t+1)=p_i(t)+\epsilon(t)\frac {\partial q_K\left(\mathbf{p}(t)\right)}{\partial p_i(t)}$,
where  $\{\epsilon(t)\}$ satisfies \eqref{eqn:stepcond}.
 \STATE For all $i\in \mathcal I$, compute $p_i(t+1)$ according to $p_i(t+1)=\min\left\{\left[\bar p_i(t+1)-\nu^*\right]^+,1\right\}$,
where $\nu^*$ satisfies $\sum_{i\in \mathcal I}\min\left\{\left[\bar p_i(t+1)-\nu^*\right]^+,1\right\}=1$.
 \STATE Set $t=t+1$ and go to Step 2. 
\end{algorithmic}\label{alg:localK}
\end{algorithm}
In Step 2 of Algorithm~\ref{alg:localK}, $\frac {\partial q_K\left(\mathbf{p}(t)\right)}{\partial p_i(t)}$ is given by
\begin{align}
\frac{\partial q_K(\mathbf p)}{\partial p_{i}}=&\sum_{n\in \mathcal N_{i}}a_n\sum\limits_{k=1}^K\frac{f_k(T_n)}{T_n}\sum_{\mathcal N_i^1 \in \mathcal {SN}_i^1(k-1)}G_{n,i}(\mathcal N_i^1,\mathbf T_{i,-n})\nonumber\\
&+\sum_{m'\in \mathcal N}\Bigg(\sum_{n\in \mathcal N, n\ne m' }a_n\sum_{k=1}^K\frac{f_k(T_n)}{T_n}\sum_{i'\in {\mathcal I}_n, i'\in {\mathcal I}_{m'} }p_{i'}\frac{4.5a_{m'}\lambda_u}{3.5T_{m'}^2\lambda_b}W_{m'}(T_{m'})^{-1}\nonumber\\
&\times\bigg(
\sum_{\substack {\mathcal N_{i'}^1 \in \mathcal {SN}_{i'}^1(k-1) \\ m' \notin \mathcal N_{i'}^1}}G_{n,i'}(\mathcal N_{i'}^1,\mathbf T_{i',-n})
-\frac{W_{m'}(T_{m'})^{-4.5}}{1-W_{m'}(T_{m'})^{-4.5}}
\sum_{\substack {\mathcal N_{i'}^1 \in \mathcal {SN}_{i'}^1(k-1) \\ m' \in \mathcal N_{i'}^1}}G_{n,i'}(\mathcal N_{i'}^1,\mathbf T_{i',-n})\bigg)\nonumber\\
&+a_{m'}\sum_{k=1}^K\frac{f^{'}_k(T_{m'})T_{m'}-f_k(T_{m'})}{T_{m'}^2}\sum_{i'\in I_{m'}}p_{i'}\sum_{\mathcal N_{i'}^1 \in \mathcal {SN}_{i'}^1(k-1)}G_{m',i'}(\mathcal N_{i'}^1,\mathbf T_{i',-m'})\Bigg) \mathbf 1[i\in I_{m'}],\label{eqn:derivative-K}
\end{align}
where $\mathbf 1[\cdot]$ denotes the indicator function and $f_k'(x)$ is given by \eqref{eq:derivative-fk}.
Step 3 is the projection of $\bar p_i(t+1)$ onto the set of the  variables satisfying the constraints in \eqref{eqn:cache-constr-indiv} and \eqref{eqn:cache-constr-sum}.

As there are $I= \binom{N}{K}$ optimization variables $\mathbf p=(p_i)_{i\in \mathcal I}$ in  Problem \ref{prob:opt} for $K>1$ and  calculating $\frac {\partial q_K\left(\mathbf{p}(t)\right)}{\partial p_i(t)}$ for each $i\in\mathcal I$ in each iteration involves high complexity (due to the complex form of $q_K\left(\mathbf{p}\right)$  in  \eqref{eq:CPrate_multifile_noise}),
 the computational complexity of a local optimal solution using Algorithm~\ref{alg:localK}  is very high for large $N$.  In the following, we propose a two-step optimization framework to obtain an asymptotically optimal solution with manageable complexity and superior performance. Specifically, we first identify a set of asymptotically optimal solutions in the high SNR and user density region, by optimizing  $q_{K,\infty}\left(\mathbf{p}\right)$ in \eqref{eq:CPrate_multifile_nonoise} among all feasible solutions. Then, we obtain the asymptotically optimal solution  achieving the optimal successful transmission probability  in the general region among the set of asymptotically optimal solutions, referred to as the best asymptotically optimal solution, by optimizing   $q_K\left(\mathbf{p}\right)$ in \eqref{eq:CPrate_multifile_noise} within this set. Later, in Fig.~\ref{fig:comparison_local_linear}, we shall verify that the best asymptotically optimal solution obtained by the two-step optimization framework is outstanding, in both performance and complexity, by comparing with the local optimal solution.

First, we identify a set of asymptotically optimal solutions in the high SNR and user density region. Specifically, we consider the optimization of the asymptotic successful transmission probability $q_{K,\infty}\left(\mathbf{p}\right)$, i.e., Problem \ref{prob:opt} with
$q_{K,\infty}\left(\mathbf{p}\right)$ in
\eqref{eq:CPrate_multifile_nonoise}  being the objective function, which has a much simpler form than  $q_K\left(\mathbf{p}\right)$  in  \eqref{eq:CPrate_multifile_noise}.

\begin{Prob} [Asymptotic Optimization for $K>1$]\label{prob:opt-K-infty-p}
\begin{align}
\max_{\mathbf{p}} &\quad
q_{K,\infty}\left(\mathbf{p}\right) \nonumber\\
s.t. & \quad \eqref{eqn:cache-constr-indiv}, \eqref{eqn:cache-constr-sum}.\nonumber
\end{align}
\end{Prob}

An optimal solution $\mathbf p^{*}\triangleq (p_i^*)_{i\in \mathcal I}$ to Problem~\ref{prob:opt-K-infty-p} is an asymptotically optimal solution to Problem~\ref{prob:opt} for $K>1$, and the resulting successful transmission probability $q^*_{K,\infty}\triangleq q_{K,\infty}(\mathbf p^{*})$ is the asymptotically optimal successful transmission probability to  Problem~\ref{prob:opt} for $K>1$. In the following, we  focus on solving Problem~\ref{prob:opt-K-infty-p} to obtain an asymptotically optimal solution to Problem~\ref{prob:opt}.

Note that $q_K\left(\mathbf{p}\right)$  in  \eqref{eq:CPrate_multifile_noise} is a function of $\mathbf T\triangleq (T_n)_{n\in \mathcal N}$. Thus, we also write $q_{K, \infty}\left(\mathbf{p}\right)$  as $q_{K,\infty}\left(\mathbf{T}\right)$.
We would like to further simplify Problem~\ref{prob:opt-K-infty-p}   by exploring the relationship between $\mathbf T$ and $\mathbf p$. Now, we introduce a new optimization problem w.r.t. $\mathbf T$.
\begin{Prob} [Equivalent Asymptotic Optimization for $K>1$]\label{prob:opt-K-infty}
\begin{align}
\max_{\mathbf T} &\quad  q_{K,\infty}\left(\mathbf{T}\right)\nonumber\\
s.t. & \quad 0\leq T_n\leq 1,\ n\in \mathcal N, \label{eqn:cache-constr-indiv-t}\\
& \quad \sum_{n\in \mathcal N}T_n=K. \label{eqn:cache-constr-sum-t}
\end{align}
\end{Prob}
The optimal solution to Problem~\ref{prob:opt-K-infty} is denoted as $\mathbf T^*\triangleq (T_n^*)_{n\in \mathcal N}$, and the optimal value of Problem~\ref{prob:opt-K-infty} is given by $ q_{K,\infty}\left(\mathbf{T}^*\right)$. The following lemma shows that  Problem~\ref{prob:opt-K-infty-p}   and Problem~\ref{prob:opt-K-infty} are equivalent.

\begin{Lem}[Equivalence between Problem~\ref{prob:opt-K-infty-p}  and Problem~\ref{prob:opt-K-infty}] For any feasible  solution $\mathbf T$ to Problem~\ref{prob:opt-K-infty}, there exists  $\mathbf p$ satisfying \eqref{eqn:def-T-n} which is  a feasible solution to Problem~\ref{prob:opt-K-infty-p}. For any feasible solution $\mathbf p$ to Problem~\ref{prob:opt-K-infty-p}, $\mathbf T$ satisfying \eqref{eqn:def-T-n} is a feasible solution to Problem~\ref{prob:opt-K-infty}. Furthermore,
the optimal values of Problem~\ref{prob:opt-K-infty-p}  and Problem~\ref{prob:opt-K-infty} are the same, i.e., $q_{K,\infty}\left(\mathbf{p}^*\right)=q_{K,\infty}\left(\mathbf{T}^*\right)=q_{K,\infty}^*$.\label{Lem:equivalence}
\end{Lem}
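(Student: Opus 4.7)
The plan is to exploit the fact that the asymptotic objective $q_{K,\infty}$ depends on $\mathbf p$ only through $\mathbf T$, so the lemma reduces to showing that the image of the $\mathbf p$-feasible set under the linear map $\mathbf p\mapsto \mathbf T$ in \eqref{eqn:def-T-n} coincides with the $\mathbf T$-feasible set defined by \eqref{eqn:cache-constr-indiv-t}--\eqref{eqn:cache-constr-sum-t}. Once this is established, every feasible point of one problem corresponds to a feasible point of the other with the same objective value, forcing the two optima to be equal.

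The forward direction (feasibility of $\mathbf p\Rightarrow$ feasibility of the induced $\mathbf T$) is a direct double-counting exercise. First I would note $T_n=\sum_{i\in\mathcal I_n}p_i\in[0,1]$ since $p_i\in[0,1]$ and $\sum_{i\in\mathcal I_n}p_i\le \sum_{i\in\mathcal I}p_i=1$. Then, swapping the order of summation and using the fact that each $\mathbf x_i$ contains exactly $K$ ones,
\begin{align}
\sum_{n\in\mathcal N}T_n=\sum_{n\in\mathcal N}\sum_{i\in\mathcal I}p_i\,x_{i,n}=\sum_{i\in\mathcal I}p_i\sum_{n\in\mathcal N}x_{i,n}=K\sum_{i\in\mathcal I}p_i=K,\nonumber
\end{align}
so \eqref{eqn:cache-constr-indiv-t}--\eqref{eqn:cache-constr-sum-t} hold.

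The reverse direction is the one I expect to be the main obstacle: given any $\mathbf T$ satisfying \eqref{eqn:cache-constr-indiv-t}--\eqref{eqn:cache-constr-sum-t}, I must produce a probability vector $\mathbf p$ on $\mathcal I$ with $\sum_{i\in\mathcal I_n}p_i=T_n$ for all $n$. The key observation is that the feasible polytope of Problem~\ref{prob:opt-K-infty} is precisely the hypersimplex $\Delta_{N,K}=\{\mathbf T\in[0,1]^N:\sum_n T_n=K\}$, and by a classical polytope result (equivalently, the base polytope of the rank-$K$ uniform matroid) its vertices are exactly the 0/1 indicator vectors of the $K$-subsets of $\mathcal N$, i.e., the vectors $\mathbf x_i$, $i\in\mathcal I$. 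Hence any $\mathbf T\in\Delta_{N,K}$ admits a representation $\mathbf T=\sum_{i\in\mathcal I}p_i\mathbf x_i$ with $p_i\ge 0$ and $\sum_i p_i=1$, which is exactly \eqref{eqn:cache-constr-indiv}--\eqref{eqn:cache-constr-sum} together with \eqref{eqn:def-T-n}. Rather than invoking the theorem as a black box, I would give a short constructive proof by induction on the support size of $\mathbf T$: pick any $K$-subset $\mathcal N_{i^\star}$ containing the indices with the largest $T_n$ values, set $p_{i^\star}=\min_{n\in\mathcal N_{i^\star}}T_n$ (or, if this empties out a nonzero $T_n$ outside the support, a comparable pivoting quantity), subtract $p_{i^\star}\mathbf x_{i^\star}$ from $\mathbf T$, renormalize, and recurse on the residual vector, which still lies in a rescaled hypersimplex. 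The process terminates in at most $|\mathcal I|$ steps and yields the desired convex combination.

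With both inclusions established, the map $\mathbf p\mapsto \mathbf T$ is a surjection from the feasible set of Problem~\ref{prob:opt-K-infty-p} onto that of Problem~\ref{prob:opt-K-infty}, and by the expression in \eqref{eq:CPrate_multifile_nonoise} the objective is invariant under this map. Therefore $q_{K,\infty}(\mathbf p^*)=q_{K,\infty}(\mathbf T^*)$, which I denote $q_{K,\infty}^*$, completing the proof.
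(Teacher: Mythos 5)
Your proof is correct and follows the same skeleton as the paper's (Appendix G): the forward direction is the identical double-counting computation $\sum_{n}T_n=\sum_i p_i\sum_n x_{i,n}=K$, and the conclusion rests on the observation that $q_{K,\infty}$ depends on $\mathbf p$ only through $\mathbf T$. The difference lies entirely in the reverse (surjectivity) direction, which is the only nontrivial step: the paper disposes of it in one sentence by citing ``the method in Fig.~1 of \cite{ICC15Giovanidis}'', i.e., an external graphical construction, whereas you supply a self-contained argument by identifying the feasible set of Problem~\ref{prob:opt-K-infty} with the hypersimplex $\{\mathbf T\in[0,1]^N:\sum_n T_n=K\}$, whose vertices are exactly the indicator vectors $\mathbf x_i$ of $K$-subsets (the base polytope of the rank-$K$ uniform matroid), so that any feasible $\mathbf T$ is a convex combination $\sum_i p_i\mathbf x_i$. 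This buys a proof that does not lean on an unexplained figure in another paper, at the cost of invoking (or reproving) the vertex characterization. One small polish for your constructive recursion: the pivot must be taken as $\min\bigl\{\min_{n\in\mathcal N_{i^\star}}T_n,\ \min_{n\notin\mathcal N_{i^\star}}(1-T_n)\bigr\}$ so that the rescaled residual stays in the hypersimplex (your parenthetical ``comparable pivoting quantity'' is gesturing at exactly the second term); with $\mathcal N_{i^\star}$ the top-$K$ index set this pivot is strictly positive and each step fixes at least one more coordinate at $0$ or at the ceiling, so termination holds. This is a refinement, not a gap.
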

\begin{proof} Please refer to Appendix F.
\end{proof}

Based on Lemma \ref{Lem:equivalence}, we first focus on  solving Problem \ref{prob:opt-K-infty} for $\mathbf T^*$.
It can be easily seen that Problem \ref{prob:opt-K-infty} is convex  and Slater's condition is satisfied, implying strong duality holds.
Using KKT conditions, we  can solve Problem~\ref{prob:opt-K-infty}.

\begin{Thm} [Optimization of $\mathbf T$]
The  optimal solution  $\mathbf T^*=(T_n^*)_{n\in \mathcal N}$ to Problem~\ref{prob:opt-K-infty}  is given by
 \begin{align}
T_n^*=\min\left\{\left[\frac{1}{c_{1,K}}\sqrt{\frac{a_nc_{2,K}}{\nu^*}}-\frac{c_{2,K}}{c_{1,K}}\right]^+,1\right\} , \ n\in \mathcal N,\label{eqn:opt-K-infty}
\end{align}
where  $\nu^*$ satisfies
\begin{align}
\sum_{n\in \mathcal N}\min\left\{\left[\frac{1}{c_{1,K}}\sqrt{\frac{a_nc_{2,K}}{\nu^*}}-\frac{c_{2,K}}{c_{1,K}}\right]^+,1\right\}=K.\label{eqn:opt-K-infty-v}
\end{align}
Here, $c_{1,K}$ and $c_{2,K}$ are given by \eqref{eqn:c1-def} and \eqref{eqn:c2-def}, respectively.
\label{Thm:solu-opt-K-infty-t}
\end{Thm}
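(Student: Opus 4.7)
The plan is to attack Problem~\ref{prob:opt-K-infty} directly by KKT, after first certifying that the problem is convex so that KKT conditions are both necessary and sufficient. Note that each summand $\frac{a_n T_n}{c_{2,K}+c_{1,K}T_n}=\frac{a_n}{c_{1,K}}\bigl(1-\frac{c_{2,K}}{c_{2,K}+c_{1,K}T_n}\bigr)$ is a concave function of $T_n\in[0,1]$ (since $T\mapsto 1/(c_{2,K}+c_{1,K}T)$ is convex for $c_{1,K},c_{2,K}>0$), and the constraints \eqref{eqn:cache-constr-indiv-t}--\eqref{eqn:cache-constr-sum-t} are linear, so Problem~\ref{prob:opt-K-infty} is a convex program. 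The point $T_n=K/N$ for all $n$ lies in the relative interior of the feasible set, so Slater's condition holds and strong duality applies.

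Next, I would introduce a multiplier $\nu$ for the equality constraint \eqref{eqn:cache-constr-sum-t} and multipliers $\mu_n,\eta_n\ge 0$ for $T_n\ge 0$ and $T_n\le 1$ respectively, and write the Lagrangian
\begin{align*}
L(\mathbf T,\nu,\boldsymbol\mu,\boldsymbol\eta)=\sum_{n\in\mathcal N}\frac{a_n T_n}{c_{2,K}+c_{1,K}T_n}-\nu\Bigl(\sum_{n\in\mathcal N}T_n-K\Bigr)+\sum_{n\in\mathcal N}\mu_n T_n-\sum_{n\in\mathcal N}\eta_n(T_n-1).
\end{align*}
A direct differentiation gives $\frac{\partial q_{K,\infty}}{\partial T_n}=\frac{a_n c_{2,K}}{(c_{2,K}+c_{1,K}T_n)^2}$, so stationarity reads $\frac{a_n c_{2,K}}{(c_{2,K}+c_{1,K}T_n^*)^2}=\nu^*-\mu_n+\eta_n$, combined with the usual complementary slackness $\mu_n T_n^*=0$ and $\eta_n(1-T_n^*)=0$.

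The main (and essentially only) technical step is then a three-way case analysis based on whether the box constraint is active. If $0<T_n^*<1$, then $\mu_n=\eta_n=0$ and solving the stationarity equation yields $T_n^*=\frac{1}{c_{1,K}}\sqrt{a_n c_{2,K}/\nu^*}-\frac{c_{2,K}}{c_{1,K}}$. If $T_n^*=0$, then $\eta_n=0$ and $\mu_n\ge 0$ forces $\frac{a_n c_{2,K}}{c_{2,K}^2}\le\nu^*$, i.e., $\frac{1}{c_{1,K}}\sqrt{a_n c_{2,K}/\nu^*}-\frac{c_{2,K}}{c_{1,K}}\le 0$. If $T_n^*=1$, then $\mu_n=0$ and $\eta_n\ge 0$ forces the unclipped value to exceed $1$. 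Collapsing these three cases into a single formula gives exactly \eqref{eqn:opt-K-infty}, and the associated $\nu^*$ is pinned down by the primal equality \eqref{eqn:cache-constr-sum-t}, yielding \eqref{eqn:opt-K-infty-v}.

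The only remaining point to check is that \eqref{eqn:opt-K-infty-v} admits a (unique) positive root $\nu^*$: the left-hand side of \eqref{eqn:opt-K-infty-v}, viewed as a function of $\nu>0$, is continuous and nonincreasing (clipping preserves monotonicity), tends to $N\ge K$ as $\nu\downarrow 0$ (every term saturates at $1$) and to $0$ as $\nu\to\infty$ (every term hits the lower clip), so by the intermediate value theorem it equals $K$ at some $\nu^*>0$. I expect no real obstacle here beyond bookkeeping; the main subtlety is simply making sure that the two-sided clipping in \eqref{eqn:opt-K-infty} is handled carefully in the case analysis, which differs from the one-sided clipping in Theorem~\ref{Thm:solu-opt-1-infty} where the upper constraint $p_n\le 1$ was automatically inactive.
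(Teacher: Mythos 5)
Your proposal is correct and follows essentially the same route as the paper, which proves this theorem by the same convexity-plus-KKT argument used for Theorem~\ref{Thm:solu-opt-1-infty} (Appendix C), here with the additional multiplier for the upper bound $T_n\le 1$ that becomes active only when the cache budget is $K$ rather than $1$. Your three-way case analysis and the intermediate-value argument for the existence of $\nu^*$ correctly supply the details the paper omits.
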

\begin{proof} 
Theorem~\ref{Thm:solu-opt-K-infty-t} can be proved in a similar way to Theorem~\ref{Thm:solu-opt-1-infty}. We omit the details due to page limitation.
\end{proof}

\begin{Rem} [Interpretation of Theorem~\ref{Thm:solu-opt-K-infty-t}] As illustrated in Fig.~\ref{Fig:optstructure} (b), the  structure of the optimal solution $\mathbf T^*$  given  by Theorem~\ref{Thm:solu-opt-K-infty-t}  is similar to the reverse water-filling structure. The file popularity distribution $\mathbf a$ and the physical layer parameters (captured in  $c_{1,K}$ and $c_{2,K}$) jointly affect   $\nu^*$.  Given $\nu^*$, the physical layer parameters (captured in  $c_{1,K}$ and $c_{2,K}$) affect the  caching probabilities of all the files  in the same way, while the popularity of file $n$  (i.e., $a_n$) only affects the caching probability of file $n$  (i.e., $T^*_n$). These features for $\mathbf T^*$  are similar to those for  $\mathbf p^*$ given by Theorem~\ref{Thm:solu-opt-1-infty}.
Similar to Theorem~\ref{Thm:solu-opt-1-infty},  from Theorem~\ref{Thm:solu-opt-K-infty-t}, we know that files of higher popularity get more storage resources, and  some files of lower popularity may not be stored.   In addition, for a popularity distribution with a heavy tail,    more  different files  can be stored. The optimal structure is similar to the reverse water-filling structure. \label{Rem:K}
\end{Rem}

From Theorem~\ref{Thm:solu-opt-K-infty-t}, we have the following corollary.

\begin{Cor}  [Asymptotically Optimal Solution for $K>1$]
If  $\frac{\sqrt{a_{1}}}{\sum_{n\in \mathcal N}\sqrt{a_{n}}}\leq\frac{c_{1,K}+c_{2,K}}{c_{1,K}K+c_{2,K}N}$ and $\frac{\sqrt{a_{n}}}{\sum_{n\in \mathcal N}\sqrt{a_{n}}}>\frac{c_{2,K}}{c_{1,K}K+c_{2,K}N}$, then we have
\begin{align}
&T_n^*=\left(K+\frac{c_{2,K}}{c_{1,K}}N\right)\frac{\sqrt{a_{n}}}{\sum_{n\in \mathcal N}\sqrt{a_{n}}}-\frac{c_{2,K}}{c_{1,K}}>0, \ n\in \mathcal N,\label{eqn:sum-pi-n}\\
&q^*_{K,\infty}=\frac{1}{c_{1,K}}\left(1-\frac{(\sum_{n\in \mathcal N}\sqrt{a_{n}})^2}{N+K\frac{c_{1,K}}{c_{2,K}}}\right).
\label{eqn:opt-q-K}
\end{align}
Here, $c_{1,K}$ and $c_{2,K}$ are given by \eqref{eqn:c1-def} and \eqref{eqn:c2-def}, respectively.
\label{Cor:solu-opt-K-infty}
\end{Cor}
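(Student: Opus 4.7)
The plan is to invoke Theorem~\ref{Thm:solu-opt-K-infty-t} and argue that, under the two hypotheses of the corollary, both the $[\cdot]^{+}$ truncation and the upper clamp to $1$ in \eqref{eqn:opt-K-infty} are inactive at every $n\in\mathcal N$. Once this is established, each $T_n^*$ takes the smooth interior form $\frac{1}{c_{1,K}}\sqrt{a_n c_{2,K}/\nu^*}-\frac{c_{2,K}}{c_{1,K}}$; the budget equation \eqref{eqn:opt-K-infty-v} then reduces to a single linear equation in $\sqrt{c_{2,K}/\nu^*}$, which I solve explicitly for $\nu^*$ and substitute back to produce \eqref{eqn:sum-pi-n}. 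Plugging \eqref{eqn:sum-pi-n} into the objective \eqref{eq:CPrate_multifile_nonoise} yields \eqref{eqn:opt-q-K} by direct simplification.

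To justify the inactivity of both clamps, I would first observe that the candidate expression $\left(K+\frac{c_{2,K}}{c_{1,K}}N\right)\frac{\sqrt{a_n}}{\sum_{m}\sqrt{a_m}}-\frac{c_{2,K}}{c_{1,K}}$ is monotonically nonincreasing in $n$, because $a_1\geq a_2\geq\cdots\geq a_N$. Hence it suffices to verify positivity at $n=N$ (the smallest value) and the upper bound $\leq 1$ at $n=1$ (the largest value). The second hypothesis $\frac{\sqrt{a_n}}{\sum_m\sqrt{a_m}}>\frac{c_{2,K}}{c_{1,K}K+c_{2,K}N}$, read at the extreme index $n=N$, is exactly the former; and the first hypothesis $\frac{\sqrt{a_1}}{\sum_m\sqrt{a_m}}\leq\frac{c_{1,K}+c_{2,K}}{c_{1,K}K+c_{2,K}N}$ is exactly the latter. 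Consequently, both clamps in \eqref{eqn:opt-K-infty} are inactive for every $n$, as required.

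Summing the interior expression over $n$ and using the budget constraint in \eqref{eqn:opt-K-infty-v} gives
$$\frac{\sqrt{c_{2,K}}}{c_{1,K}\sqrt{\nu^*}}\sum_{m\in\mathcal N}\sqrt{a_m}-\frac{c_{2,K}N}{c_{1,K}}=K,$$
from which $\sqrt{c_{2,K}/\nu^*}=(c_{1,K}K+c_{2,K}N)/\sum_m\sqrt{a_m}$. Substituting this back into the interior form recovers \eqref{eqn:sum-pi-n}. For \eqref{eqn:opt-q-K}, I note the clean identity $c_{2,K}+c_{1,K}T_n^*=\sqrt{a_n}(c_{1,K}K+c_{2,K}N)/\sum_m\sqrt{a_m}$, which cancels the denominator in \eqref{eq:CPrate_multifile_nonoise} against the $\sqrt{a_n}$ factor in the numerator of $T_n^*$. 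After exploiting $\sum_n a_n=1$, the sum collapses to $\frac{1}{c_{1,K}}-\frac{c_{2,K}\left(\sum_m\sqrt{a_m}\right)^{2}}{c_{1,K}(c_{1,K}K+c_{2,K}N)}$, which, after factoring out $1/c_{1,K}$, is exactly \eqref{eqn:opt-q-K}.

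The argument is essentially algebraic bookkeeping; there is no substantive difficulty beyond Theorem~\ref{Thm:solu-opt-K-infty-t} itself. The only subtlety worth flagging is that the second hypothesis should be read as a statement about the extreme index $n=N$, which then yields positivity uniformly over $n$ by the monotonicity of $\sqrt{a_n}$; this is consistent with the pattern already established in Corollary~\ref{Cor:solu-opt-1-infty}, and reassuringly, setting $K=1$ in the formulas here recovers Corollary~\ref{Cor:solu-opt-1-infty} verbatim.
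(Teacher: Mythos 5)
Your proposal is correct and follows exactly the route the paper intends: the paper gives no explicit proof, simply stating that the corollary follows from Theorem~\ref{Thm:solu-opt-K-infty-t}, and your argument supplies precisely that derivation --- verifying via the two hypotheses (at the extreme indices $n=N$ and $n=1$) that both the $[\cdot]^{+}$ and the $\min\{\cdot,1\}$ clamps are inactive, solving the budget equation for $\nu^{*}$, and substituting into \eqref{eq:CPrate_multifile_nonoise}. The algebra checks out, including the cancellation $c_{2,K}+c_{1,K}T_n^{*}=\sqrt{a_n}\,(c_{1,K}K+c_{2,K}N)/\sum_{m}\sqrt{a_m}$ that collapses the objective to \eqref{eqn:opt-q-K}.
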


Note that Corollary~\ref{Cor:solu-opt-K-infty} can be interpreted in a similar way to   Corollary~\ref{Cor:solu-opt-1-infty}.
In addition, by Corollary~\ref{Cor:solu-opt-1-infty} and Corollary~\ref{Cor:solu-opt-K-infty}, we can see that under certain conditions, $q^*_{K,\infty}$ increases  with $K$. This indicates the tradeoff between the network performance and network  storage resources.

Based on Lemma \ref{Lem:equivalence}, we know that  any $\mathbf p^*$ in the convex polyhedron $\{\mathbf p^*: \eqref{eqn:def-T-n*}, \eqref{eqn:cache-constr-indiv}, \eqref{eqn:cache-constr-sum}\}$  is an optimal solution to Problem~\ref{prob:opt-K-infty-p}, i.e., an asymptotically optimal solution to Problem~\ref{prob:opt} for $K>1$, where
\begin{align}
\sum_{i\in \mathcal I_n}p_i^*=T_n^*, \ n\in \mathcal N.\label{eqn:def-T-n*}
\end{align}
and $\mathbf T^*$ is given by Theorem~\ref{Thm:solu-opt-K-infty-t} or Corollary~\ref{Cor:solu-opt-K-infty}. In other words, we have a set of asymptotically optimal solutions in the high SNR and user density region.

Next, we obtain the best asymptotically optimal solution which achieves the optimal successful transmission probability  in the general region among the set of asymptotically optimal solutions, i.e., $\{\mathbf p^*: \eqref{eqn:def-T-n*}, \eqref{eqn:cache-constr-indiv}, \eqref{eqn:cache-constr-sum}\}$. Specifically, substituting $\mathbf p^*$ satisfying  \eqref{eqn:def-T-n*} into $q_K\left(\mathbf{p}\right)$ in \eqref{eq:CPrate_multifile_noise}, we have
 \begin{align}
 q_K(\mathbf p^*)=&\sum\limits_{n\in \mathcal N}\frac{a_{n}}{T_n^*} \sum_{i\in \mathcal I_n}p_i^*\left(\sum_{k=1}^K\sum_{\mathcal N_i^1 \in \mathcal {SN}_i^1(k-1)}G_{n,i}(\mathcal N_i^1,\mathbf T^*_{i,-n}) f_k(T_n^*)\right)\nonumber\\
 =&\sum_{i\in \mathcal I}\left(\sum_{n\in \mathcal N_i}\frac{a_{n}}{T_n^*} \sum_{k=1}^K\sum_{\mathcal N_i^1 \in \mathcal {SN}_i^1(k-1)}G_{n,i}(\mathcal N_i^1,\mathbf T^*_{i,-n}) f_k(T_n^*)\right)p_i^*\triangleq q_K(\mathbf p^*,\mathbf T^*).\nonumber
 \end{align}
For given $\mathbf T^*$, we would like to obtain the best asymptotically optimal solution which maximizes the successful transmission probability $q_K(\mathbf p^*,\mathbf T^*)$  in the general region among all the asymptotically optimal solutions in  $\{\mathbf p^*: \eqref{eqn:def-T-n*}, \eqref{eqn:cache-constr-indiv}, \eqref{eqn:cache-constr-sum}\}$.
\begin{Prob} [Optimization of $\mathbf p^*$ under given $\mathbf T^*$ for $K>1$]
\begin{align}\overline{q^*_{K,\infty}}\triangleq\max_{\mathbf p^*} &\quad q_K(\mathbf p^*,\mathbf T^*)\nonumber\\
s.t. & \quad \eqref{eqn:def-T-n*}, \eqref{eqn:cache-constr-indiv}, \eqref{eqn:cache-constr-sum}.\nonumber
\end{align}\label{Prob:asm-improvement}
\end{Prob}

Problem~\ref{Prob:asm-improvement} is a linear programming problem. The optimal solution to Problem~\ref{Prob:asm-improvement}  can be obtained using the simplex method.
To further reduce complexity, before applying the simplex method, we can first derive some caching probabilities which are zero based on the relationship between $\mathbf p^*$ and $\mathbf T^*$ in \eqref{eqn:def-T-n*}.
In particular, for all $i\in\mathcal I_n$ and $n\in\{n\in\mathcal N: T_n^*=0\}$, we have $p_i^*=0$; for all $i\not\in\mathcal I_n$ and $n\in\{n\in\mathcal N: T_n^*=1\}$, we have $p_i^*=0$. Thus, we have
\begin{align}
p_i^*=0, \ i\in\mathcal I', \label{eqn:p_i-0}
\end{align}
where  $\mathcal I'\triangleq \cup_{n\in\{n\in\mathcal N: T_n^*=0\}}\mathcal I_n\cup\left(\mathcal I\setminus \cup_{n\in\{n\in\mathcal N: T_n^*=1\}}\mathcal I_n\right)$.
Then, we can compute  the remaining caching probabilities for the communications in $\mathcal I\setminus \mathcal I'$ using the simplex method.

Therefore, by solving Problem~\ref{prob:opt-K-infty} and Problem~\ref{Prob:asm-improvement},  we can obtain the best asymptotically optimal solution in the general region, using the method  summarized in Algorithm~\ref{alg:sympK}.\footnote{Note that  Algorithm~\ref{alg:sympK} is different from   the method demonstrated in Fig. 1 of \cite{ICC15Giovanidis}.} Note that the computational complexity of Algorithm~\ref{alg:sympK} is much smaller than that of Algorithm~\ref{alg:localK} due to the following reasons: (i) since $\mathbf T^*$ from Theorem~\ref{Thm:solu-opt-K-infty-t} (Corollary~\ref{Cor:solu-opt-K-infty}) is in closed-form, the complexity of Step~1 of Algorithm~\ref{alg:sympK} is low; (ii) for most popularity distributions of interest, the cardinality of set $\{n\in\mathcal N: T_n^*>0\}$ is much smaller than $N$, and hence, the number of combinations of possibly positive caching probability (i.e., $I-|\mathcal I'|$) is actually much smaller than the total number of combinations (i.e., $I$); and (iii) Step~3 of Algorithm~\ref{alg:sympK} is of manageable complexity.

\begin{algorithm} [Local Optimal Solution for $K>1$]
\caption{Asymptotically Optimal Solution for $K>1$}
\begin{algorithmic}[1]
\STATE Obtain the optimal solution  $\mathbf T^*$  to Problem~\ref{prob:opt-K-infty} based on Theorem~\ref{Thm:solu-opt-K-infty-t} or Corollary~\ref{Cor:solu-opt-K-infty}.
\STATE Determine $\mathcal I'$ and choose $p_i^*=0$ for all $i\in \mathcal I'$ according to \eqref{eqn:p_i-0}.
 \STATE  Obtain $\{p_i^*: i\in \mathcal I\setminus \mathcal I'\}$ by solving Problem~\ref{Prob:asm-improvement} under the constraint in  \eqref{eqn:p_i-0} using  the simplex method.
\end{algorithmic}\label{alg:sympK}
\end{algorithm}

Now, we compare the proposed local optimal design $\mathbf p^{\dagger}$ (Local Opt. obtained using  Algorithm~\ref{alg:localK}) and the best asymptotically optimal  design $\mathbf p^*$ (Asymp. Opt. obtained using  Algorithm~\ref{alg:sympK}) using a numerical example.
From Fig.~\ref{fig:comparison_local_linear}, we can see that the performance of Asymp. Opt. is very close to that of Local Opt., even at moderate transmit SNR  and user density.
On the other hand,  the average matlab computation time for $\mathbf p^{\dagger}$ is 112 times of that for $\mathbf p^*$, indicating that
the complexity of Local Opt. is much higher than that of  Asymp. Opt. These demonstrate the  applicability and effectiveness  of the proposed   Asymp. Opt. in the general transmit SNR and user density region.

\begin{figure}[h]
\begin{center}
 \includegraphics[width=7cm]{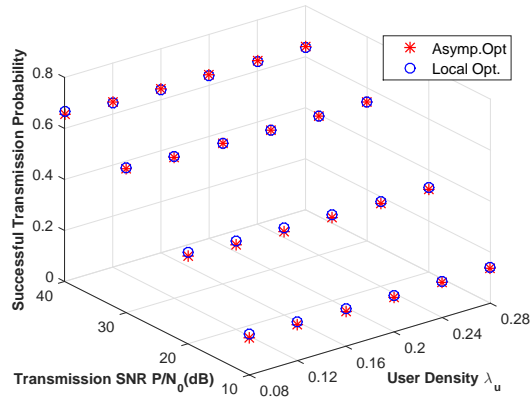}
  \end{center}
    \caption{\small{Comparison between Local Opt. and Asymp. Opt. $N=8$, $K=4$, $\alpha = 4$, $\lambda_b = 0.01$,  $W = 10\times 10^6$, $\tau = 5\times10^5$, and $a_n=\frac{n^{-\gamma}}{\sum_{n\in \mathcal N}n^{-\gamma}}$  with $\gamma =0.8$.}}
\label{fig:comparison_local_linear}
\end{figure}

%

\section{Numerical Results}\label{Sec:simu}

\begin{figure}[t]
\begin{center}
  \subfigure[\small{
Cache size  at $\gamma=0.6$, $\lambda_b=0.02$ and $\lambda_u=0.1$.}]
  {\resizebox{7cm}{!}{\includegraphics{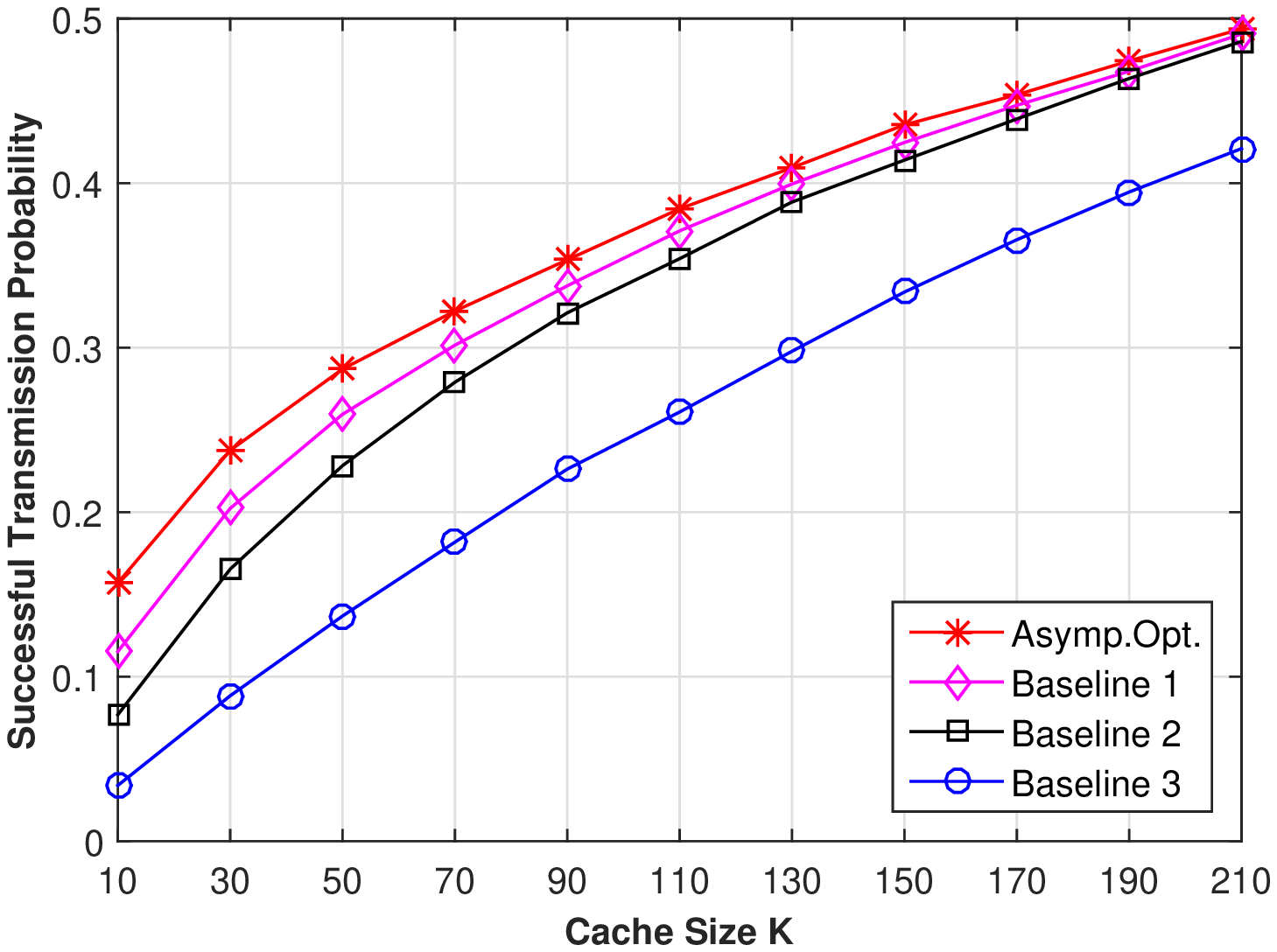}}}\quad
  \subfigure[\small{
 Zipf exponent at $K=30$, $\lambda_b=0.02$ and $\lambda_u=0.1$.}]
  {\resizebox{7cm}{!}{\includegraphics{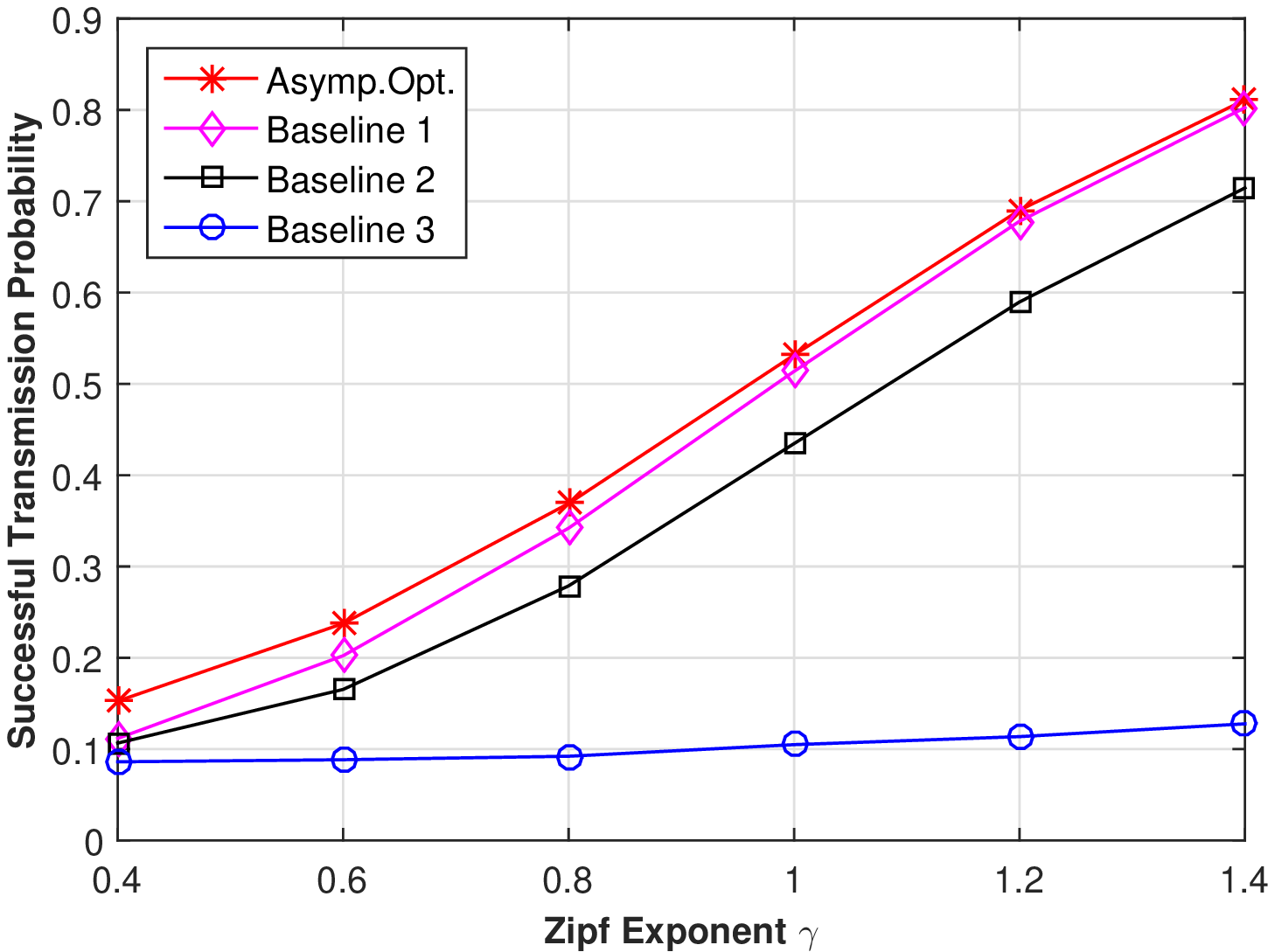}}}\quad
  \subfigure[\small{
BS density  at $K=30$, $\gamma=0.6$ and $\lambda_u=0.1$.}]
  {\resizebox{7cm}{!}{\includegraphics{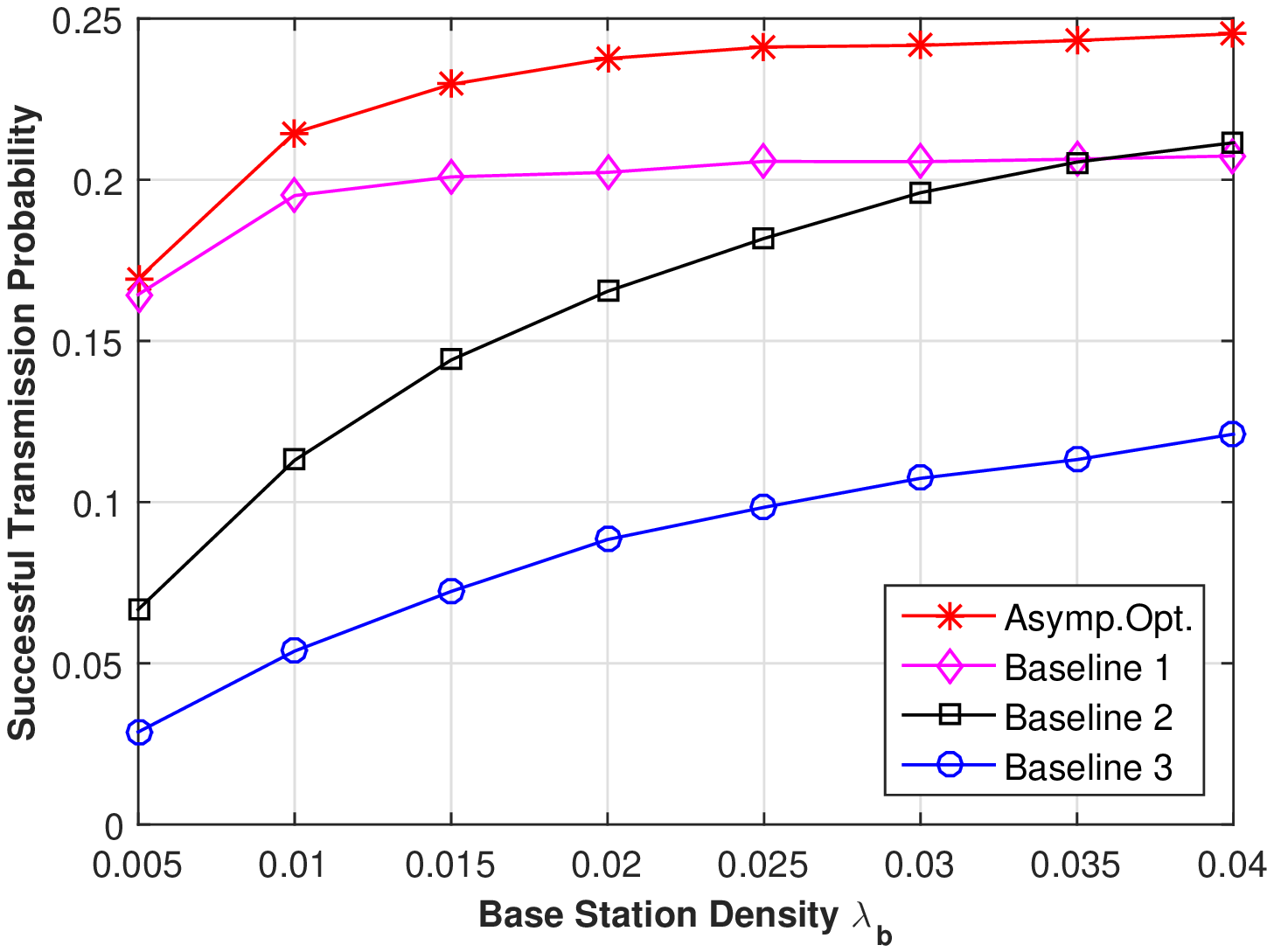}}}
  \subfigure[\small{
User density  at $K=30$, $\gamma=0.6$ and $\lambda_b =0.02$.}]
  {\resizebox{7cm}{!}{\includegraphics{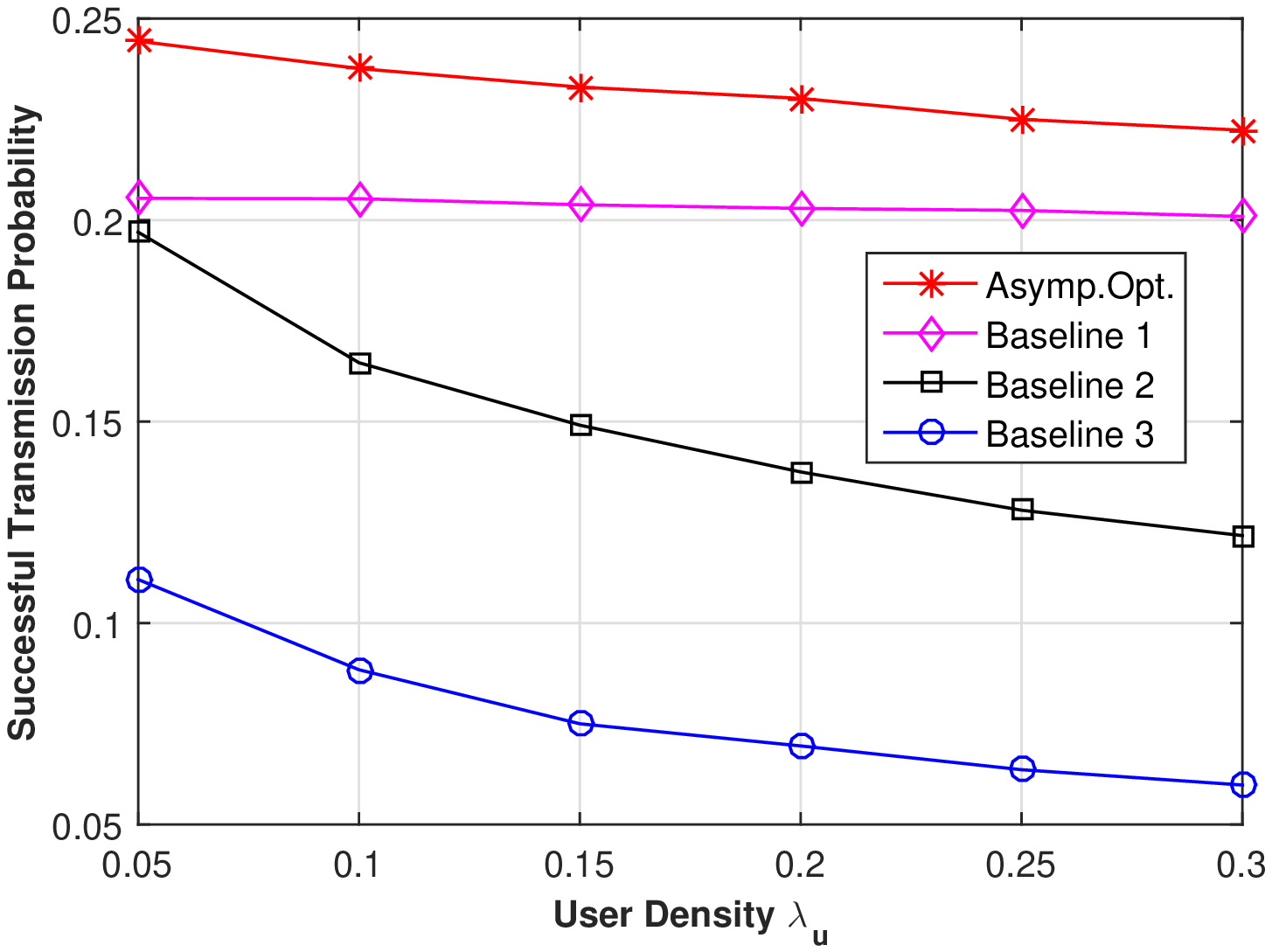}}}
  \end{center}
    \caption{\small{Successful transmission probability versus cache size $K$, Zipf exponent $\gamma$, base station density $\lambda_b$ and user density $\lambda_u$ for $\frac{P}{N_0}=30dB$,  $\alpha=4$, $W = 10\times 10^6$, $\tau =  10^5$ and $N = 1000$.}}
\label{fig:simulation-large}
\end{figure}

In this section, we compare the proposed asymptotically optimal  design $\mathbf p^*$ with three  schemes.\footnote{When $N=1000$, the complexity of Local Opt. is not acceptable. Thus, in Fig.~\ref{fig:simulation-large}, we only consider the proposed Asymp. Opt. and the three baseline schemes.}
Baseline 1 refers to the design in which  the most $K$ popular  files are stored at each BS, i.e., $T_n=1$ for all $n=1,\cdots, K$ and $T_n=0$ for all $n=K+1,\cdots, N$ \cite{ICC15Giovanidis,EURASIP15Debbah}. Baseline 2 refers to the design in which each BS  selects $K$ files in an i.i.d. manner with  file $n$ being selected with probability $a_n$ \cite{DBLP:journals/corr/BharathN15}. Note that in Baseline 2, each BS may cache multiple copies of one file, leading to storage  waste.
Baseline 3 refers to the design in which each BS randomly selects one combination to cache according to the uniform distribution \cite{cachingmimoLiu15}, i.e., $p_i=\frac{1}{I}$ for all $i\in \mathcal I$.
Note that the three baseline schemes also adopt the multicasting scheme as in our design.
In the simulation, we assume the popularity follows Zipf distribution, i.e., $a_n=\frac{n^{-\gamma}}{\sum_{n\in \mathcal N}n^{-\gamma}}$, where $\gamma$ is the Zipf exponent. Note that a small $\gamma$ means a heavy-tail popularity distribution.

Fig.~\ref{fig:simulation-large} illustrates   the successful transmission probability versus different parameters. From Fig.~\ref{fig:simulation-large}, we can observe that the proposed design outperforms all the three baseline schemes. In addition, the proposed design, Baseline 1 and Baseline 2 have much better performance than Baseline 3, as they exploit file popularity to improve the performance. The performance gap between the proposed design and Baseline 1 is relatively large at small cache size $K$, small Zipf exponent $\gamma$, large BS density $\lambda_b$ and small user density $\lambda_u$. This demonstrates the benefit of file diversity in these regions.
Please note that the successful transmission probability in Fig.~\ref{fig:simulation-large} is small due to the low file availability when $K/N$ is small. This performance is achieved without backhaul cost or extra delay cost. Our goal here is to study how cache itself can affect the system performance.

Specifically,
Fig.~\ref{fig:simulation-large} (a)  illustrates   the successful transmission probability versus  the cache size $K$.
We can see that the performance of all the schemes increases with $K$. This is because as $K$ increases, each BS can cache more files, and the probability that a randomly requested file is cached at a nearby BS increases. Fig.~\ref{fig:simulation-large} (b)  illustrates   the  successful transmission probability versus  the Zipf exponent $\gamma$. We can observe that the performance of the proposed design, Baseline 1 and Baseline 2  increases with the Zipf exponent $\gamma$ much faster than Baseline 3. This is because when $\gamma$ increases, the tail of popularity distribution becomes small, and hence, the average network file load decreases. The performance increase of Baseline 3 with $\gamma$ only comes from the decrease of the average network file load. While, under  the proposed design, Baseline 1 and Baseline 2, the probability that a randomly requested file is cached at a nearby BS increases with $\gamma$. Thus,
the performance increases of the proposed design, Baseline 1 and Baseline 2 with $\gamma$ are due to  the decrease of the  average network file load and the increase of the chance of a requested file being cached at a nearby BS.  Fig.~\ref{fig:simulation-large} (c)  illustrates   the successful transmission probability versus  the BS density $\lambda_b$. We can see that the performance of all the schemes increases with $\lambda_b$. This is because the distance between a user whose requested file is cached in the network and its serving BS decreases, as $\lambda_b$ increases. Fig.~\ref{fig:simulation-large} (d)  illustrates   the successful transmission probability versus  the user density $\lambda_u$. We can see that the performance of all the schemes decreases with $\lambda_u$. This is because the probability of a cached file being requested by at least one user  increases, as $\lambda_u$ increases. In addition,
 in Fig.~\ref{fig:simulation-large} (c) The performance of the proposed design and Baseline 1 increases slower than that of Baseline 2 and Baseline 3; in Fig.~\ref{fig:simulation-large} (d), the performance of the proposed design and Baseline 1 decreases slower than that of Baseline 2 and Baseline 3. The reason is that more different files are cached in the network under Baseline 2 and Baseline 3 than under the proposed design and Baseline 1. Note that  under the proposed design, usually more than $K$ but less than $N$ different files are cached in the network; under Baseline 1, only $K$ most popular files are cached in the network; under Baseline 2 and Baseline 3, all $N$ files are cached in the network.

\section{Conclusion}
 In this paper, we consider  the analysis and optimization of caching and multicasting  in a large-scale  cache-enabled wireless  network. We propose a random caching and multicasting scheme with a design parameter. Utilizing tools from stochastic geometry, we first derive  the successful transmission probability. Then, using optimization techniques, we develop an iterative  numerical algorithm to obtain a local optimal design in the general   region, and also derive a simple asymptotically optimal design  in the high SNR  and user density region.
 Finally, we show that the asymptotically  optimal design achieves promising performance in the general  region with much lower complexity than the local optimal design.


\section*{Appendix A: Proof of Theorem~\ref{Thm:generalK1}}

First, as illustrated in Section~\ref{Subsec:K1-analysis}, when $K=1$, there are two types of interferers, i.e., the interfering BSs storing the file requested  by $u_{0}$   and the interfering BSs storing the other files.   Thus, we rewrite the SINR expression ${\rm SINR}_{n,0}$ in \eqref{eqn:SINR} as
\begin{align}\label{eq:SINR_K1_v2}
{\rm SINR}_{n,0} 
&=\frac{{D_{0,0}^{-\alpha}}\left|h_{0,0}\right|^{2}}{I_{n}+\sum_{m\in\mathcal{N},m\neq n}I_{m}+\frac{N_{0}}{P}},
\end{align}
\normalsize{
where $\Phi_{b,n}$ ($n\in \mathcal N$) denotes the point process generated by BSs storing file $n$, $I_{n}\triangleq\sum_{\ell\in\Phi_{b,n}\backslash B_{n,0}}D_{\ell,0}^{-\alpha}\left|h_{\ell,0}\right|^{2}$, and $I_{m}\triangleq\sum_{\ell\in\Phi_{b,m}}D_{\ell,0}^{-\alpha}\left|h_{\ell,0}\right|^{2}$ ($m\in\mathcal{N}, m\neq n$). Due to the random caching policy and independent thinning \cite[Page 230]{FTNhaenggi09}, we know that $\Phi_{b,m}$ is a homogeneous PPP with density $p_{m}\lambda_{b}$.}

Next, we calculate the conditional successful transmission probability of file $n$ requested by $u_{0}$ conditioned on $D_{0,0}=d$, denoted as   $q_{1,n,D_{0,0}}\left({\bf p},d\right)\triangleq {\rm Pr}\left[W\log_{2}\left(1+{\rm SINR}_{n,0}\right)\ge \tau\big|D_{0,0}=d\right]$. Based on \eqref{eq:SINR_K1_v2}, we have
\begin{align}
&q_{1,n,D_{0,0}}\left({\bf p},d\right)\notag\\
\eqla&{\rm E}_{I_{1},\ldots,I_{N}}\left[{\rm Pr}\left[\left|h_{0,0}\right|^{2}\ge \left(2^{\frac{\tau}{W}}-1\right)D_{0,0}^{\alpha}\left(I_{n}+\sum_{m\in\mathcal N,m\neq n}^{N}I_{m}+\frac{N_{0}}{P}\right)\Big|D_{0,0}=d\right]\right]\notag\\
\eqlb&{\rm E}_{I_{1},\ldots,I_{N}}\left[\exp\left(-\left(2^{\frac{\tau}{W}}-1\right)d^{\alpha}\left(I_{n}+\sum_{m\in\mathcal{N},m\neq n}I_{m}+\frac{N_{0}}{P}\right)\right)\right]\notag\\
\eqlc&\underbrace{{\rm E}_{I_{n}}\left[\exp\left(-\left(2^{\frac{\tau}{W}}-1\right)d^{\alpha}I_{n}\right)\right]}_{\triangleq\mathcal{L}_{I_{n}}(s,d)|_{s=\left(2^{\frac{\tau}{W}}-1\right)d^{\alpha}} }\prod_{m\in\mathcal{N},m\neq n}\underbrace{{\rm E}_{I_{m}}\left[\exp\left(-\left(2^{\frac{\tau}{W}}-1\right)d^{\alpha}I_{m}\right)\right]}_{\triangleq\mathcal{L}_{I_{m}}(s,d)|_{s=\left(2^{\frac{\tau}{W}}-1\right)d^{\alpha}} } \nonumber\\
&\times\exp\left(-\left(2^{\frac{\tau}{W}}-1\right)d^{\alpha}\frac{N_{0}}{P}\right),\label{eq:condi_CP_1}
\end{align}
\normalsize{where $(a)$ is obtained based on (\ref{eq:SINR_K1_v2}), (b) is obtained by noting that $|h_{0,0}|^{2}\dis \exp(1)$, and (c) is due to the independence of the Rayleigh fading channels and the independence of the PPPs $\Phi_{m}$ ($m\in\mathcal{N}$). To calculate $q_{1,n,D_{0,0}}\left({\bf p},d\right)$ according to \eqref{eq:condi_CP_1}, we first calculate $\mathcal{L}_{I_{n}}(s,d)$ and $\mathcal{L}_{I_{m}}(s,d)$ ($m\in \mathcal N, m\neq n$), respectively. The expression of $\mathcal{L}_{I_{n}}(s,d)$ is calculated below.}
\begin{align}\label{eq:LT_K1_n}
\mathcal{L}_{I_{n}}(s,d)=&{\rm E}\left[\exp\left(-s\sum_{\ell\in\Phi_{b,n}\backslash B_{n,0}}D_{\ell,0}^{-\alpha}\left|h_{\ell,0}\right|^{2} \right)\right]
={\rm E}\left[\prod_{\ell\in\Phi_{b,n}\backslash B_{n,0}}\exp\left(-s D_{\ell,0}^{-\alpha}\left|h_{\ell,0}\right|^{2} \right)\right]\notag\\
\eqld&\exp\left(-2\pi p_{n}\lambda_{b}\int_{d}^{\infty}\left(1-\frac{1}{1+sr^{-\alpha}}\right)r{\rm d}r\right)\nonumber\\
\eqle&\exp\left(-\frac{2\pi}{\alpha}p_{n}\lambda_{b}s^{\frac{2}{\alpha}}B^{'}\left(\frac{2}{\alpha},1-\frac{2}{\alpha},\frac{1}{1+sd^{-\alpha}}\right)\right),
\end{align}
\normalsize{where $(d)$ is obtained by utilizing the probability generating functional of PPP \cite[Page 235]{FTNhaenggi09}, and $(e)$ is obtained by first replacing $s^{-\frac{1}{\alpha}}r$ with $t$, and then replacing $\frac{1}{1+t^{-\alpha}}$ with $w$. Similarly, the expression of $\mathcal{L}_{I_{m}}(s,d)$ ($m\in\mathcal{N}, m\neq n$) is calculated as follows:}
\begin{align}\label{eq:LT_K1_k}
\mathcal{L}_{I_{m}}(s,d)=&{\rm E}\left[\exp\left(-s\sum_{\ell\in\Phi_{b,m}}D_{\ell,0}^{-\alpha}\left|h_{\ell,0}\right|^{2}\right)\right]={\rm E}\left[\prod_{\ell\in\Phi_{b,m}}\exp\left(-s D_{\ell,0}^{-\alpha}\left|h_{\ell,0}\right|^{2}\right)\right]\notag\\
=&\exp\left(-2\pi p_{m}\lambda_{b}\int_{0}^{\infty}\left(1-\frac{1}{1+sr^{-\alpha}}\right)r{\rm d}r\right)\nonumber\\
=&\exp\left(-\frac{2\pi}{\alpha}p_{m}\lambda_{b}s^{\frac{2}{\alpha}}B\left(\frac{2}{\alpha},1-\frac{2}{\alpha}\right)\right)\;.
\end{align}
\normalsize{Substituting (\ref{eq:LT_K1_n}) and (\ref{eq:LT_K1_k}) into (\ref{eq:condi_CP_1}), we obtain   $q_{1,n,D_{0,0}}\left({\bf p},d\right)$ as follows:}
\begin{align}\label{eq:condi_CP_1_v2}
&q_{1,n,D_{0,0}}\left({\bf p},d\right)\nonumber\\
=&\exp\left(-\frac{2\pi}{\alpha}p_{n}\lambda_{b}d^{2}\left(2^{\frac{\tau}{W}}-1\right)^{\frac{2}{\alpha}}B^{'}\left(\frac{2}{\alpha},1-\frac{2}{\alpha},2^{-\frac{\tau}{W}}\right)\right)\exp\left(-\left(2^{\frac{\tau}{W}}-1\right)d^{\alpha}\frac{N_{0}}{P}\right)\notag\\
&\times  \exp\left(-\frac{2\pi}{\alpha}(1-p_n)\lambda_{b}d^{2}\left(2^{\frac{\tau}{W}}-1\right)^{\frac{2}{\alpha}}B\left(\frac{2}{\alpha},1-\frac{2}{\alpha}\right)\right).
\end{align}

\normalsize{Now, we calculate $q_{1,n}\left({\bf p}\right)$ by removing the condition of $q_{1,n,D_{0,0}}\left({\bf p},d\right)$ on $D_{0,0}=d$.  Note that we have the p.d.f. of $D_{0,0}$ as $f_{D_{0,0}}(d)=2\pi p_{n}\lambda_{b}d\exp\left(-\pi p_{n}\lambda_{b}d^{2}\right)$, as the BSs storing file $n$ form a homogeneous PPP with density $p_{n}\lambda_{b}$. Thus, we have:}
\begin{align}
&q_{1,n}\left({\bf p}\right)=\int_{0}^{\infty}q_{1,n,D_{0,0}}\left({\bf p},d\right)f_{D_{0,0}}(d){\rm d}d=f_1(p_n),\notag
\end{align}
\normalsize{where $f_1(p_n)$ is given by \eqref{eqn:def-f}.
Finally, by $q_1(\mathbf p)=\sum_{n\in \mathcal N}a_{n}q_{1,n}(\mathbf p)$, we can prove Theorem~\ref{Thm:generalK1}.}

\section*{Appendix B: Proof of Corollary~\ref{Cor:generalK1}}

When $\frac{P}{N_{0}}\to\infty$, $\exp\left(-\left(2^{\frac{\tau}{W}}-1\right)d^{\alpha}\frac{N_{0}}{P}\right)\to1$. Thus, by \eqref{eq:CPrate_1file_noise}, we have:
\begin{align}
q_{1,\infty}\left(\mathbf{p}\right)=&2\pi\lambda_{b}\sum_{n\in \mathcal N}a_{n}p_{n}\int_{0}^{\infty}d\exp\left(-\frac{2\pi}{\alpha}p_{n}\lambda_{b}\left(2^{\frac{\tau}{W}}-1\right)^{\frac{2}{\alpha}}d^{2}B^{'}\left(\frac{2}{\alpha},1-\frac{2}{\alpha},2^{-\frac{\tau}{W}}\right)\right)\notag\\
&\times\exp\left(-\pi\lambda_{b}p_{n}d^{2}\right)\exp\left(-\frac{2\pi}{\alpha}\left(1-p_{n}\right)\lambda_{b}\left(2^{\frac{\tau}{W}}-1\right)^{\frac{2}{\alpha}}d^{2}B\left(\frac{2}{\alpha},1-\frac{2}{\alpha}\right)\right){\rm d}d\notag\\
=&2\pi\lambda_{b}\sum_{n\in \mathcal N}a_{n}p_{n}\int_{0}^{\infty}d\exp\left(-\pi\lambda_{b}p_{n}\left(c_{2,1}+c_{2,1}\frac{1}{p_{n}}\right)d^{2}\right){\rm d}d\nonumber
\end{align}
\normalsize{Noting that $\int_{0}^{\infty}d\exp\left(-cd^{2}\right){\rm d}d=\frac{1}{2c}$ ($c$ is a constant), we can solve the integral and prove Corollary~\ref{Cor:generalK1}.}

\section*{Appendix C: Proof of Theorem~\ref{Thm:solu-opt-1-infty}}

The Lagrangian of Problem \ref{prob:opt-1-infty} is given by
$$L(\mathbf p, \boldsymbol \eta, \nu)=\sum_{n\in \mathcal N}\frac{a_np_n}{c_{2,1}+c_{1,1}p_n}+\sum_{n\in \mathcal N}\eta_np_n+\nu\left(1-\sum_{n\in \mathcal N} p_n\right),$$
\normalsize{where $\eta_n\geq 0$ is the Lagrange multiplier associated with \eqref{eqn:cache-constr-indiv},  $\nu$ is the Lagrange multiplier associated with \eqref{eqn:cache-constr-sum}, and $\boldsymbol\eta\triangleq (\eta_n)_{n\in \mathcal N}$. Thus, we have
$\frac{\partial L}{\partial  p_n}(\mathbf p, \boldsymbol \eta, \nu)=\frac{a_nc_{2,1}}{(c_{2,1}+c_{1,1}p_n)^2}+\eta_n-\nu.$  Since strong duality holds, primal optimal $\mathbf p^*$ and dual optimal  $\boldsymbol\eta^*$, $\nu^*$ satisfy KKT conditions, i.e.,  (i) primal constraints: \eqref{eqn:cache-constr-indiv}, \eqref{eqn:cache-constr-sum},  (ii) dual constraints $\eta_n\geq 0$ for all $n\in \mathcal N$, (iii) complementary slackness $\eta_np_n=0$ for all $n\in \mathcal N$, and (iv) $\frac{a_nc_{2,1}}{(c_{2,1}+c_{1,1}p_n)^2}+\eta_n-\nu=0$ for all $n\in \mathcal N$. By \eqref{eqn:cache-constr-indiv}, (ii), (iii) and (iv), we have:  if $\nu<\frac{a_n}{c_{2,1}}$, then $\eta_n=0$ and $p_n=\frac{1}{c_{1,1}}\sqrt{\frac{a_nc_{2,1}}{\nu}}-\frac{c_{2,1}}{c_{1,1}}$; if  $\nu\geq\frac{a_n}{c_{2,1}}$, then $\eta_n=\nu-\frac{a_nc_{2,1}}{(c_{2,1}+c_{1,1}p_n)^2}$ and $p_n=0$. Thus, we have $p_n^*=\left[\frac{1}{c_{1,1}}\sqrt{\frac{a_nc_{2,1}}{\nu^*}}-\frac{c_{2,1}}{c_{1,1}}\right]^+$. Combining   \eqref{eqn:cache-constr-sum}, we can prove Theorem~\ref{Thm:solu-opt-1-infty}.}

\section*{Appendix D: Proof of Lemma~\ref{Lem:pmf-K}}

Let the random variable $Y_{m,n,i}\in\{0,1\}$ denote whether file $m\in \mathcal N_{i,-n}$ is requested by the users associated with $B_{n,0}$ when  $B_{n,0}$ contains combination $i\in \mathcal I_n$. When $B_{n,0}$ contains combination $i\in \mathcal I_n$, we have $K_{n,0}=1+\sum_{m\in \mathcal N_{i,-n}} Y_{m,n,i}$. Thus, we have
\begin{align}
&\Pr \left[K_{n,0}=k|\text{$B_{n,0}$ contains combination $i\in \mathcal I_n$}\right]\nonumber\\
=&\sum_{\mathcal N_i^1\in \mathcal{SN}_i^1(k-1) }\prod_{m\in \mathcal N_i^1}(1-\Pr[Y_{m,n,i}=0])\prod_{m\in \mathcal N_{i,-n}\setminus\mathcal N_i^1}\Pr[Y_{m,n,i}=0],\label{eqn:K-pmf-i}
\end{align}
\normalsize{where $k=1,\cdots, K$. The probability that $B_{n,0}$ contains combination $i\in \mathcal I_n$ is $\frac{p_i}{T_n}$. Thus, by the law of total probability, we have}
\begin{align}
&\Pr \left[K_{n,0}=k\right]
=\sum_{i\in \mathcal I_n}\frac{p_i}{T_n}\Pr \left[K_{n,0}=k|\text{$B_{n,0}$ contains combination $i\in \mathcal I_n$}\right],\nonumber
\end{align}
\normalsize{where $k=1,\cdots, K$. Thus, to prove \eqref{eqn:K-pmf}, it remains to  calculate $\Pr[Y_{m,n,i}=0]$.   The p.m.f.  of  $Y_{m,n,i}$ depends on the  p.d.f. of the size of the Voronoi cell of $B_{n,0}$ w.r.t. file $m\in \mathcal N_{i,-n}$  when  $B_{n,0}$ contains combination $i\in \mathcal I_n$, which is unknown. We approximate this p.d.f. based on  the  p.d.f. of the size of the Voronoi cell to which a randomly chosen user belongs \cite{SGcellsize13}.
Under this approximation, we can calculate the p.m.f. of  $Y_{m,n,i}$ using Lemma 3 of \cite{SGcellsize13}: $Pr[Y_{m,n,i}=0]
\approx\left(1+3.5^{-1}\frac{a_m\lambda_u}{T_m\lambda_b}\right)^{-4.5}$.
Therefore, we complete the proof.}

\section*{Appendix E: Proof of Theorem~\ref{Thm:generalKmulti}}


First, as illustrated in Section~\ref{Subsec:K-analysis}, when $K>1$, there are two types of interferers, i.e., the interfering BSs storing the combinations containing the file requested by $u_{0}$  and the interfering BSs without  the desired file of $u_0$
Thus, we rewrite the SINR expression ${\rm SINR}_{n,0}$ in \eqref{eqn:SINR} as:
\begin{align}\label{eq:SINR_K>1_v2}
{\rm SINR}_{n,0} 
=\frac{{D_{0,0}^{-\alpha}}\left|h_{0,0}\right|^{2}}{I_{n}+I_{-n}+\frac{N_{0}}{P}},
\end{align}
\normalsize{where $\Phi_{b,n}$ is the point process generated by BSs containing file combination $i\in\mathcal{I}_{n}$ and $\Phi_{b,-n}$ is the point process generated by BSs containing file combination $i\in \mathcal I, i\not\in\mathcal{I}_{n}$, $I_n\triangleq\sum_{\ell\in\Phi_{b,n}\backslash B_{n,0}}D_{\ell,0}^{-\alpha}\left|h_{\ell,0}\right|^{2}$, and $I_{-n}\triangleq \sum_{\ell\in\Phi_{b,-n}}D_{\ell,0}^{-\alpha}\left|h_{\ell,0}\right|^{2}$. Due to the random caching policy and independent thinning \cite[Page 230]{FTNhaenggi09}, we obtain that $\Phi_{b,n}$ is an homogeneous PPP with density $\lambda_{b}T_n$ and $\Phi_{b,-n}$ is an homogeneous PPP with density $\lambda_{b}\left(1-T_n\right)$.}

Next, we calculate the conditional successful transmission probability of file $n$ requested by $u_{0}$ conditioned on $D_{0,0}=d$ when the file load is $k$, denoted as   $$q_{k,n,D_{0,0}}\left({\bf p},d\right)\triangleq {\rm Pr}\left[\frac{W}{k}\log_{2}\left(1+{\rm SINR}_{n,0}\right)\ge \tau\big|D_{0,0}=d\right].$$ \normalsize{Similar to (\ref{eq:condi_CP_1}) and based on (\ref{eq:SINR_K>1_v2}), we have:}
\begin{align}\label{eq:condi_CP_K}
&q_{k,n,D_{0,0}}\left({\bf p},d\right)\nonumber\\
=&{\rm E}_{I_n,I_{-n}}\left[{\rm Pr}\left[\left|h_{0,0}\right|^{2}\ge \left(2^{\frac{k\tau}{W}}-1\right)D_{0,0}^{\alpha}\left(I_n+I_{-n}+\frac{N_{0}}{P}\right)\Big|D_{0,0}=d\right]\right]\notag\\
=&\mathcal{L}_{I_{n}}(s,d)|_{s=\left(2^{\frac{k\tau}{W}}-1\right)d^{\alpha}} \mathcal{L}_{I_{-n}}(s,d)|_{s=\left(2^{\frac{k\tau}{W}}-1\right)d^{\alpha}}\exp\left(-\left(2^{\frac{k\tau}{W}}-1\right)d^{\alpha}\frac{N_{0}}{P}\right)\;.
\end{align}
\normalsize{To calculate $q_{k,n,D_{0,0}}\left({\bf p},d\right)$ according to \eqref{eq:condi_CP_K}, we first calculate $\mathcal{L}_{I_{n}}(s,d)$ and $\mathcal{L}_{I_{-n}}(s,d)$, respectively. Similar to (\ref{eq:LT_K1_n}) and (\ref{eq:LT_K1_k}), we have}
\begin{align}
\mathcal{L}_{I_n}(s,d)=&\exp\left(-\frac{2\pi}{\alpha}T_n\lambda_{b}s^{\frac{2}{\alpha}}B^{'}\left(\frac{2}{\alpha},1-\frac{2}{\alpha},\frac{1}{1+sd^{-\alpha}}\right)\right)\label{eq:LT_K>1_n}\\
\mathcal{L}_{I_{-n}}(s,d)=&\exp\left(-\frac{2\pi}{\alpha}\left(1-T_n\right)\lambda_{b}s^{\frac{2}{\alpha}}B\left(\frac{2}{\alpha},1-\frac{2}{\alpha}\right)\right)\;.\label{eq:LT_K>1_k}
\end{align}
\normalsize{Substituting (\ref{eq:LT_K>1_n}) and (\ref{eq:LT_K>1_k}) into (\ref{eq:condi_CP_K}), we obtain  $q_{k,n,D_{0,0}}\left({\bf p},d\right)$ as follows:}
\begin{align}\label{eq:condi_CP_K_v2}
&q_{k,n,D_{0,0}}\left({\bf p},d\right)\nonumber\\
=&\exp\left(-\frac{2\pi}{\alpha}T_n\lambda_{b}d^{2}\left(2^{\frac{k\tau}{W}}-1\right)^{\frac{2}{\alpha}}B^{'}\left(\frac{2}{\alpha},1-\frac{2}{\alpha},2^{-\frac{k\tau}{W}}\right)\right) \exp\left(-\left(2^{\frac{k\tau}{W}}-1\right)d^{\alpha}\frac{N_{0}}{P}\right)\notag\\
&\times \exp\left(-\frac{2\pi}{\alpha}\left(1-T_n\right)\lambda_{b}d^{2}\left(2^{\frac{k\tau}{W}}-1\right)^{\frac{2}{\alpha}}B\left(\frac{2}{\alpha},1-\frac{2}{\alpha}\right)\right).
\end{align}

\normalsize{Now, we calculate $q_{K,n}\left({\bf p}\right)$ by first removing the condition of $q_{k,n,D_{0,0}}\left({\bf p},d\right)$ on $D_{0,0}=d$.  Note that we have the p.d.f. of $D_{0,0}$ as $f_{D_{0,0}}(d)=2\pi T_n\lambda_{b}d\exp\left(-\pi T_n\lambda_{b}d^{2}\right)$, as the BSs storing file $n$ form a homogeneous PPP with density $T_n\lambda_{b}$. Thus, we have:}
\begin{align}
&\int_{0}^{\infty}q_{k,n,D_{0,0}}\left({\bf p},d\right)f_{D_{0,0}}(d){\rm d}d\nonumber\\
=&2\pi T_n\lambda_{b}\int_{0}^{\infty}d\exp\left(-\pi T_n\lambda_{b}d^{2}\right)\exp\left(-\left(2^{\frac{k\tau}{W}}-1\right)d^{\alpha}\frac{N_{0}}{P}\right)\nonumber\\
&\times \exp\left(-\frac{2\pi}{\alpha}\left(1-T_n\right)\lambda_{b}d^{2}\left(2^{\frac{k\tau}{W}}-1\right)^{\frac{2}{\alpha}}B\left(\frac{2}{\alpha},1-\frac{2}{\alpha}\right)\right) \nonumber\\
&\times\exp\left(-\frac{2\pi}{\alpha}T_n\lambda_{b}d^{2}\left(2^{\frac{k\tau}{W}}-1\right)^{\frac{2}{\alpha}}B^{'}\left(\frac{2}{\alpha},1-\frac{2}{\alpha},2^{-\frac{k\tau}{W}}\right)\right){\rm d}d.\label{eq:CP_K_n}
\end{align}
\normalsize{Finally, by $q_K(\mathbf p)=\sum_{n\in \mathcal N}a_{n}\sum_{k=1}^K \Pr [K_{n,0}=k]\int_{0}^{\infty}q_{k,n,D_{0,0}}\left({\bf p},d\right)f_{D_{0,0}}(d){\rm d}d$, we can prove Theorem~\ref{Thm:generalKmulti}.}


\section*{Appendix G: Proof of Lemma~\ref{Lem:equivalence}}

Consider any feasible solution $\mathbf p$ to Problem~\ref{prob:opt-K-infty-p} satisfying \eqref{eqn:cache-constr-indiv} and \eqref{eqn:cache-constr-sum}. By \eqref{eqn:cache-constr-indiv} and \eqref{eqn:cache-constr-sum}, we know that $\mathbf T$ satisfies \eqref{eqn:cache-constr-indiv-t}. In addition, we have $\sum_{n\in \mathcal N}T_n=\sum_{n\in \mathcal N}\sum_{i\in \mathcal I_n}p_i=K\sum_{i\in \mathcal I}p_i=K$, i.e., $\mathbf T$ satisfies \eqref{eqn:cache-constr-sum-t}. Thus, $\mathbf T$ is a feasible solution to Problem~\ref{prob:opt-K-infty}. On the other hand, consider any feasible solution $\mathbf T$ to Problem~\ref{prob:opt-K-infty} satisfying \eqref{eqn:cache-constr-indiv-t} and \eqref{eqn:cache-constr-sum-t}. By the method in Fig. 1 of \cite{ICC15Giovanidis}, we can easily construct a feasible solution $\mathbf p$ to Problem~\ref{prob:opt-K-infty-p}. Therefore, we can show that Problem~\ref{prob:opt-K-infty-p} is equivalent to Problem~\ref{prob:opt-K-infty}. In other words, the optimal values of the two problems are the same.

\end{document}